\newtheorem{theorem}{Theorem}
\newtheorem{lemma}{Lemma}
\def\BibTeX{{\rm B\kern-.05em{\sc i\kern-.025em b}\kern-.08em T\kern-.1667em\lower.7ex\hbox{E}\kern-.125emX}}
\definecolor{myred}{RGB}{200,0,0}
\definecolor{myblue}{RGB}{0,100,180}
\definecolor{mygreen}{RGB}{0,140,0}
\newcommand{\RED}[1]{\textcolor{myred}{#1}}
\newcommand{\BLUE}[1]{\textcolor{myblue}{#1}}
\newcommand{\GREEN}[1]{\textcolor{mygreen}{#1}}
\newcommand{\Set}[1]{\left\{\, #1 \,\right\}}
\newcommand{\bigSet}[1]{\big\{\, #1 \,\big\}}
\newcommand{\bigO}[0]{\mathcal{O}}
\newcommand{\bigprob}[1]{\Pr\big[#1 \big]}
\newcommand{\msg}[0]{\mathbf{X}}
\newcommand{\indset}[0]{\mathcal{I}}
\newcommand{\que}[1]{\mathbf{Q}^{[#1]}}
\newcommand{\ans}[1]{\mathbf{A}^{[#1]}}
\newcommand{\qspace}[0]{\mathcal{Q}}
\newcommand{\aspace}[0]{\mathcal{A}}
\newcommand{\key}[0]{\kappa}
\newcommand{\perm}[0]{\mathbf{\Pi}}
\newcommand{\dmd}[0]{\mathbf{W}}
\newcommand{\si}[0]{\mathbf{S}}
\newcommand{\ee}[0]{\mathrm{e}}
\newcommand{\ptrn}[0]{\mathbf{F}}
\newcommand{\infserver}[0]{\hat{n}}
\newcommand{\rnd}[1]{\mathbf{#1}}
\newcommand{\downcost}[0]{\mathsf{D}}
\newcommand{\Wdowncost}[0]{\downcost_{\textrm{L-PIR-SI}}^{(W)}}
\newcommand{\WSdowncost}[0]{\downcost_{\textrm{L-PIR-SI}}^{(W, S)}}
\newcommand{\smallmath}[0]{\fontsize{7.5pt}{9pt}\selectfont}
\begin{document}

\title{\Huge On the Leaky Private Information Retrieval \\with Side Information}

\author{
\IEEEauthorblockN{Yingying Huangfu\IEEEauthorrefmark{1} and 
Tian Bai\IEEEauthorrefmark{2}}
\IEEEauthorblockA{\IEEEauthorrefmark{1}Shiled Lab, Huawei Singapore Research Center, Singapore, Singapore\\
\IEEEauthorrefmark{2}Department of Informatics, University of Bergen, Bergen, Norway \\
E-mails:~\IEEEauthorrefmark{1}yingying.huangfu@outlook.com, \IEEEauthorrefmark{2}tian.bai@uib.no
}}


\maketitle

\begin{abstract}
    This paper investigates the problem of Leaky Private Information Retrieval with Side Information~(L-PIR-SI), providing a fundamental characterization of the trade-off among leaky privacy, side information, and download cost.
    We propose a unified probabilistic framework to design L-PIR-SI schemes under $\varepsilon$-differential privacy variants of both $W$-privacy and $(W, S)$-privacy.
    Explicit upper bounds on the download cost are derived, which strictly generalize existing results: our bounds recover the capacity of perfect PIR-SI as $\varepsilon \to 0$, and reduce to the known $\varepsilon$-leaky PIR rate in the absence of side information.
    Furthermore, we conduct a refined analysis of the privacy--utility trade-off at the scaling-law level, demonstrating that the leakage ratio exponent scales as $\bigO(\log \frac{K}{M + 1})$ under leaky $W$-privacy, and as $\bigO(\log K)$ under leaky $(W, S)$-privacy in the minimal non-trivial setting $M = 1$, where $K$ and $M$ denote the number of messages and the side information size, respectively.
\end{abstract}


\section{Introduction}

The private information retrieval~(PIR) problem has attracted significant attention since it was introduced by Chor \textit{et.\ al.}~\cite{CG95} in 1995 due to its critical role in ensuring user privacy during data retrieval from distributed databases.
In the canonical PIR setting~\cite{Sj17}, a user aims to retrieve a specific message from $N$ non-communicating servers, each storing an identical copy of a database containing $K$ messages, without revealing the index of the demand message to any server.
The communication efficiency of PIR schemes is typically measured by the maximum achievable ratio between the demand message size and the number of downloaded symbols.
The maximum of these rates is referred to as the \emph{capacity} and its reciprocal as the \emph{download cost}.
In 2017, the download cost of PIR was fully characterized by Sun and Jafar in~\cite{Sj17} as
\begin{equation*} \label{eq: PIR}
    \downcost_{\textrm{PIR}} = 1 + \frac{1}{N} + \cdots + \frac{1}{N^{K - 1}} = 1 + \frac{1}{N - 1} \cdot \Big(1 - \frac{1}{N^{K - 1}} \Big).
\end{equation*}
More recently, the PIR problem has evolved to incorporate various constraints, including PIR with storage limitations~\cite{ZT20,GZ21,VY23,BZ20,TS18,YS18}, PIR with colluding servers~\cite{SJ18a,BU19a,YL21,HF22,ZTS22}, symmetric PIR~\cite{SunJ19,AU25,ZC24,CLKL22,WBU22}, and multi-message PIR~\cite{BanawanU18a,SiavoshaniSM21,WangBU20}.

One important direction for reducing PIR communication overhead is to leverage side information, where the user possesses prior knowledge of a subset of messages other than the demand one.
This problem, termed PIR with side information (PIR-SI), has been extensively investigated in both single-server~\cite{HeidarzadehGKRS18,HeidarzadehKRS19,LuJ23,Li20} and multi-server configurations~\cite{SiavoshaniSM21,KadheGHRS20,TianS019,LiG20,ChenWJ20a,WeiBU19a}.
The privacy requirement of PIR-SI can be categorized into two types: \textit{$W$-privacy} and \textit{$(W, S)$-privacy}.
The $W$-privacy condition ensures that the index of the demand message remains private from each server, and the $(W, S)$-privacy also protects the index of side information.
The download cost of multi-server PIR-SI under $(W, S)$-privacy was fully characterized in~\cite{ChenWJ20a} as
\begin{equation} \label{eq: PIR-SI WS}
    \downcost_{\textrm{PIR-SI}}^{(W, S)} = 1 + \frac{1}{N - 1} \cdot \Big(1 - \frac{1}{N^{K - M - 1}} \Big),
\end{equation}
with side information size $M$.
However, determining the download cost of PIR-SI under the less restrictive $W$-privacy condition is more challenging.
In 2020, the work in~\cite{KadheGHRS20} established the first upper bound on the download cost as
\ifthenelse{\boolean{shortver}}{
$\downcost_{\textrm{PIR-SI}}^{(W)} \leq 1 + \frac{1}{N - 1} \cdot (1 - \frac{1}{N^{\lceil g \rceil - 1}})$ with $g =\frac{K}{M + 1}$.
}{
\begin{equation} 
    \downcost_{\textrm{PIR-SI}}^{(W)} \leq 1 + \frac{1}{N - 1} \cdot \Big(1 - \frac{1}{N^{\lceil g \rceil - 1}} \Big),
\end{equation}
where $g = \frac{K}{M + 1}$.
}
Later, \cite{LiG20} claimed that this upper bound is tight for certain symmetric and constant-download schemes.
More recently, a probabilistic approach~\cite{WH24} further improved the upper bound to
\begin{equation} \label{eq: PIR-SI W}
    \downcost_{\textrm{PIR-SI}}^{(W)} \leq 1 + \frac{1}{N - 1} \cdot \bigg(1 - \frac{1}{\sum_{k = 0}^{\lceil g \rceil - 1} \binom{g - 1}{k} (N - 1)^{k}} \bigg).
\end{equation}
This result improves upon the download cost upper bound in~\cite{TianS019} whenever $K$ is not divisible by $M + 1$.
To the best of our knowledge, a tight general characterization of the download cost is still unavailable.

Another significant direction for enhancing communication efficiency is to relax the requirement of perfect privacy~\cite{SamyTL19,SamyATL21,HS22,ZhaoHZT25,ChenJJ24,ZhaoHTS25}.
This leads to the study of the leaky PIR~(L-PIR), which allows controlled information leakage about the demand index to reduce download cost.
By optimizing the probability distributions for canonical PIR schemes, several works~\cite{SamyTL19,SamyATL21,ZhaoHTS25} have explored L-PIR in the sense of differential privacy, where the leakage is bounded by a non-negative privacy budget $\varepsilon$~(termed $\varepsilon$-leaky).
The currently best known L-PIR scheme~\cite{ZhaoHTS25} achieves an upper bound on the download cost given by
\begin{equation} \label{eq: L-PIR}
    \downcost_{\textrm{L-PIR}} \leq 1 + \frac{1}{N - 1} \cdot \Big(1 - \frac{1}{((N - 1) \ee^{-\varepsilon} + 1)^{K - 1}} \Big),
\end{equation}
achieving a leakage ratio exponent of $\bigO(\log K)$ with a fixed download cost $\downcost$ and $N$ servers, for the first time in an L-PIR scheme.

\paragraph*{Contribution}

This work investigates the leaky PIR with side information~(L-PIR-SI) problem, aiming to further enhance communication efficiency by integrating the advantages of side information and relaxed privacy constraints.
To this end, we develop a unified probabilistic framework for constructing L-PIR-SI schemes under both the $\varepsilon$-leaky $W$-privacy and $(W, S)$-privacy constraints.
This framework captures the interplay between side-information-aided retrieval and $\varepsilon$-leaky privacy, and yields explicit upper bounds on the download cost.
Under the $\varepsilon$-leaky $W$-privacy condition, the download cost is upper-bounded by
\begin{equation}
    \Wdowncost \leq 1 + \frac{1}{N - 1} \cdot \bigg( 1 - \frac{1}{\sum_{k = 0}^{\lceil g \rceil - 1} \binom{g - 1}{k} (N - 1)^{k} \ee^{-k\varepsilon}} \bigg),
\end{equation}
where $g = \frac{K}{M + 1}$ and $\binom{g - 1}{k}$ is the generalized binomial coefficient.
For the special case $M=1$ under $\varepsilon$-leaky $(W, S)$-privacy, the download cost is upper-bounded by
\begin{equation}
    \WSdowncost  \leq 1 + \frac{1}{N - 1} \cdot \Big( 1 - \frac{1}{((N - 1)\ee^{-\varepsilon} + 1)^{K - 2}} \Big).
\end{equation}

Our schemes not only match known download cost results in the perfect privacy limit ($\varepsilon \to 0$) and the side-information-free case ($M=0$), but also recover the leakage ratio exponents of existing L-PIR schemes at $M = 0$.
In particular, for a fixed download cost $\downcost$ and number of servers $N$, our results yield leakage ratio exponents of $\mathcal{O}(\log \frac{K}{M + 1})$ and $\mathcal{O}(\log K)$ under the $\varepsilon$-leaky $W$-privacy and $(W,S)$-privacy constraints, respectively.
These results provide a seamless generalization of the state-of-the-art performance.

\section{Model and Preliminaries}

We denote by $[a: b]$ the set of integers $\Set{a, a + 1, \ldots, b}$ and use $[a]$ to simply denote the set $[1: a]$.
For a set $\{ X_{1}, X_{2}, \ldots, X_{K} \}$ and an index set $I \subseteq [K]$, we use $X_{I}$ to denote the subset $\{ X_{i} : i \in I \}$.
Particularly, when $I = [a: b]$, we use $X_{a:b}$ to denote the subset $X_{I}$.

\subsection{Problem Formulation}

Consider a network composed of a user and $N \geq 2$ non-communicating database servers, denoted by $[N]$.
Each of which has a copy of $K$ independent messages, denoted by $X_{1:K}$.
The user wishes to retrieve one of these $K$ messages, and without loss of generality, we assume that $K \geq 2$.
Each message contains $L$ sub-packets distributed uniformly on a finite field $\mathbb{F}_{\mathsf{q}}$.
Denote $H(\cdot)$ as the entropy function under the logarithm of base $\mathsf{q}$, we define
\begin{equation}
    L \coloneq H(\msg_{1}) = H(\msg_{2}) = \cdots = H(\msg_{K}),
\end{equation}
called the length of each message.

Suppose the user is interested in retrieving a message $X_{W}$ with $W \in [K]$, given prior knowledge of $M$ messages $X_{S}$, where $S \subseteq [K] \setminus \{W\}$ and $|S| = M$.
Besides, we denote $U \coloneq [K] \setminus (S \cup \{K\})$.
We refer to $W$ as the \textit{demand index}, $S$ as the \textit{side information index} set, and $U$ as the \textit{unknown index} set.
To retrieve the demand message $X_{W}$ given the side information $X_{S}$, the user generates a query $\que{W, S}_{n}$ from the space $\qspace$ and sends it to each server $n \in [N]$.
The query $\que{W, S}_{n}$ is a (deterministic or stochastic) function of $W$, $S$, and $X_{S}$.
This means that the queries are independent of the other messages with indices in $[K] \setminus S$, i.e., for any $n \in [N]$, $W \in [K]$, and $S \subseteq [K] \setminus \{W\}$,
\begin{equation} \label{cond: query vs data}
    I\big(\mathbf{X}_{1:K}; \que{\dmd,\si}_{n} \mid W, S, X_{S}\big) = 0.
\end{equation}

Upon receiving the query, the server $n$ responds to the user with an answer $\ans{W, S}_{n}$ taken from the space $\aspace$.
The answer $\ans{W, S}_{n}$ is a deterministic function of $\que{W, S}_{n}$ and $X_{1:K}$, i.e., for any query $q \in \qspace$, it holds
\begin{equation} \label{cond: answer vs query}
  H\big(\ans{\dmd,\si}_{n} \mid \que{\dmd,\si}_{n} = q, X_{1:K}\big) = 0.
\end{equation}
We refer to the set of queries and the answers as the L-PIR-SI scheme.

The \textit{correctness condition} requires that the user can successfully decode the message $X_{W}$ from the answers $\ans{W, S}_{1:N}$ and the side information $X_{S}$.
That is, for any answer $a \in \aspace$, $W \in [K]$ and $S \subseteq [K] \setminus \{W\}$, it holds
\begin{equation} \label{cond: correctness} 
    H\big(\rnd{X}_{W} \mid \ans{W, S}_{1:N} = a, X_{S}\big) = 0.
\end{equation}

\subsection{Leaky Privacy}

For each database $n \in [N]$, our focus privacy condition in this paper is a leaky setting of the $W$-privacy and $(W, S)$-privacy, which is a relaxation of the perfect privacy condition.
Under the perfect privacy requirement~\cite{Sj17,ChenWJ20a,WH24,TianS019}, any query $\que{W, S}_{n}$ does not reveal any information about the demand index $W$ to server $n$.
In a leaky privacy setting, we allow a certain amount of information about the demand index $W$ to be leaked to each server.

Following the previous works on L-PIR~\cite{SamyTL19,SamyATL21,ZhaoHTS25}, we consider the leakage measured via differential privacy and adopt the following leaky conditions for $W$-privacy and $(W, S)$-privacy of the L-PIR-SI schemes, respectively. 

Formally speaking, an L-PIR-SI scheme satisfies \emph{$\varepsilon$-leaky $W$-privacy} if for all $n \in [N]$, $W, W' \in [K]$, query $q \in \qspace$, and answer $a \in \aspace$, it holds
\begin{equation} \label{cond: W-privacy}
    \frac{\bigprob{\que{\dmd,\si}_{n} = q, \ans{\dmd, \si}_{n} = a\mid W}}{\bigprob{\que{\dmd,\si}_{n} = q, \ans{\dmd, \si}_{n} = a \mid W'}} \leq \ee^{\varepsilon},
\end{equation}
where $\varepsilon$ is called the leakage ratio exponent.
Note that this condition may also allow for a leakage of a certain amount of information regarding the side information index set $S$. 
When $\varepsilon = 0$, it degenerates to perfect $W$-privacy.

In addition, an L-PIR-SI scheme satisfies \emph{$\varepsilon$-leaky $(W, S)$-privacy} if for each server $n \in [N]$, $W, W' \in [K]$, $S \subseteq [K] \setminus \{W\}$, $S' \subseteq [K] \setminus \{W'\}$, query $q \in \qspace$, and answer $a \in \aspace$, it holds
\begin{equation} \label{cond: WS-privacy}
    \frac{\bigprob{\que{\dmd, \si}_{n} = q, \ans{\dmd, \si}_{n} = a \mid W, S}}{\bigprob{\que{\dmd, \si}_{n} = q, \ans{\dmd, \si}_{n} = a \mid W', S' } } \leq \ee^{\varepsilon},
\end{equation}
where $\varepsilon$ is called the leakage ratio exponent.
Note that this condition not only protects the demand index $W$ but also the side information index set $S$ from being leaked to each server.
When $\varepsilon = 0$, it degenerates to perfect $(W, S)$-privacy.

In this work, we measure the download efficiency of an L-PIR-SI scheme by the \textit{normalized average download cost},
defined as the total number of downloaded symbols from all servers per message symbol retrieved, i.e.,
\begin{equation}
    \downcost \coloneq \max_{W \in [K], S \subseteq [K] \setminus \{W\}} \frac{1}{L} \cdot \sum\nolimits_{n=1}^{N} H\big(\ans{W, S}_{n}\big).
\end{equation}
The \emph{download cost} of the L-PIR-SI problem under $W$-privacy or $(W, S)$-privacy, respectively denoted by $\Wdowncost$ and $\WSdowncost$, is then defined as the minimum of this quantity over all feasible L-PIR-SI schemes for the given $N$, $K$, and $M$.
Note that the download cost is the inverse of the capacities defined in~\cite{ZhaoHTS25}.

\section{Main Results}


\begin{theorem}\label{thm: W-privacy}
For $\varepsilon$-leaky $W$-privacy, the L-PIR-SI download cost is upper-bounded by
    \begin{equation}\label{eq: W-privacy}
        \Wdowncost \leq 1 + \frac{1}{N - 1} \cdot \bigg( 1 - \frac{1}{\sum_{k = 0}^{\lceil g \rceil - 1}\binom{g - 1}{k}(N - 1)^{k} \ee^{-k\varepsilon}} \bigg),
    \end{equation}
    where $g =  \frac{K}{M + 1}$, and $\binom{g - 1}{k} = \frac{(g - 1)(g - 2) \cdots (g - k)}{k!}$ is the generalized binomial coefficient.
\end{theorem}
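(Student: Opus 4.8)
The statement is an achievability bound, so the plan is to construct an explicit L-PIR-SI scheme whose normalized average download cost meets the right-hand side of~\eqref{eq: W-privacy}; no converse is required. The scheme layers two mechanisms. The outer mechanism is a \emph{probabilistic partition-and-code} step that converts the side information into download savings, generalizing the construction behind~\eqref{eq: PIR-SI W}: given the demand $W$ and the side-information set $S$, the user samples a random partition $\mathcal{P}$ of $[K]$ whose block containing $W$ is a subset of $\{W\} \cup S$, while the remaining $K - M - 1$ messages are grouped by a randomized rule whose shape distribution is a free design parameter and which, on average, behaves like a partition into $g = \frac{K}{M+1}$ blocks even when $M + 1 \nmid K$. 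The inner mechanism treats each block sum as a single virtual message and runs an $\varepsilon$-leaky PIR protocol over the $\lceil g \rceil$ virtual messages, generalizing the scheme behind~\eqref{eq: L-PIR}, with its biased query/permutation distribution carrying a tunable leakage parameter.

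First I would dispatch correctness and the structural conditions. Since the demand block lies inside $\{W\} \cup S$ and the user knows $X_S$, recovering that block's sum --- which the inner $\varepsilon$-leaky PIR protocol delivers --- yields $X_W$, establishing~\eqref{cond: correctness}; conditions~\eqref{cond: query vs data} and~\eqref{cond: answer vs query} hold because the partition and the inner queries depend only on $(W, S, X_S)$ and internal randomness, and each answer is a deterministic function of its query and $X_{1:K}$. The substantive step is $\varepsilon$-leaky $W$-privacy~\eqref{cond: W-privacy}: expanding $\bigprob{\que{\dmd,\si}_{n} = q, \ans{\dmd,\si}_{n} = a \mid W}$ as an average over $S$ and over the partitions consistent with $(W, S)$ of the inner scheme's conditional probability, I would bound the ratio between two demand indices $W$ and $W'$ using (i) a symmetrization property of the random partition, by which for every $W$ the joint distribution over the partition and the demanded-block index presented to the servers is the same, so the outer step contributes no leakage about $W$; and (ii) the per-layer leakage profile of the inner leaky PIR scheme, in which downloads consisting of sums of $k$ virtual messages are weighted by factors proportional to $\ee^{-k\varepsilon}$ so that the worst-case ratio over demanded indices is $\ee^{\varepsilon}$. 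The tunable parameter of the inner scheme is then pushed to the most aggressive value consistent with this bound.

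Next I would compute the expected normalized download cost; by the symmetry of the construction this quantity is the same for every $(W, S)$, so the maximum in its definition equals the common value, an expectation over the scheme's internal randomness. Conditioned on a partition with $t$ blocks, the inner $\varepsilon$-leaky PIR scheme costs $1 + \frac{1}{N-1}\bigl(1 - ((N-1)\ee^{-\varepsilon}+1)^{-(t-1)}\bigr)$, so that $((N-1)\ee^{-\varepsilon}+1)^{t-1} = \sum_{k=0}^{t-1}\binom{t-1}{k}(N-1)^{k}\ee^{-k\varepsilon}$ sits in the denominator of the reciprocal term; averaging over the randomized partition --- whose shape distribution is chosen precisely so that this expectation interpolates correctly --- replaces each ordinary coefficient $\binom{t-1}{k}$ by the generalized coefficient $\binom{g-1}{k}$ over the same range $0 \le k \le \lceil g \rceil - 1$, which gives~\eqref{eq: W-privacy}. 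I would close with two consistency checks: letting $\varepsilon \to 0$ collapses the reciprocal term to $\sum_{k}\binom{g-1}{k}(N-1)^{k}$ and recovers~\eqref{eq: PIR-SI W}, while $M = 0$ forces $g = K$ and a trivial partition, recovering~\eqref{eq: L-PIR}.

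The main obstacle is the joint design of the two randomization layers so that the $\varepsilon$ budget is spent optimally: the partition distribution must simultaneously keep the demand block inside $\{W\} \cup S$ (for correctness), symmetrize the demand index over all $W$ (so that all $W$-leakage is confined to the inner scheme), and have an expected inner download cost that collapses exactly to the generalized-binomial sum. This last requirement is delicate precisely when $g$ is non-integral, since the naive choice --- a uniform partition into $\lceil g \rceil$ equal-size blocks --- is strictly suboptimal, and one must instead mix partition shapes whose per-layer contributions average to $\binom{g-1}{k}$; checking that this mixture is simultaneously privacy-admissible is where the bulk of the argument lies.
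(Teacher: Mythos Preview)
Your high-level intuition is sensible, but the modular ``partition + inner leaky PIR'' plan does not match what the paper actually does, and the download-cost step has a concrete gap. You write that, conditioned on a partition with $t$ blocks, the inner cost is $1+\frac{1}{N-1}\bigl(1-((N-1)\ee^{-\varepsilon}+1)^{-(t-1)}\bigr)$, and that averaging over the partition ``replaces each ordinary coefficient $\binom{t-1}{k}$ by the generalized coefficient $\binom{g-1}{k}$''. But the quantity to be averaged is the \emph{reciprocal} $((N-1)\ee^{-\varepsilon}+1)^{-(t-1)}$, so linearity cannot push the expectation inside the denominator; there is no reason $E_t\bigl[1/\sum_k\binom{t-1}{k}(N-1)^k\ee^{-k\varepsilon}\bigr]$ should equal $1/\sum_k\binom{g-1}{k}(N-1)^k\ee^{-k\varepsilon}$. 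If instead you let the partition law depend on $\varepsilon$ to hit the single target value, you owe a separate privacy proof for that $\varepsilon$-dependent mixture, and your symmetrization claim (that the outer step contributes no leakage about $W$) is now entangled with the cost optimization rather than decoupled from it.

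The paper sidesteps this by \emph{not} layering two black-box schemes. It builds the query distribution directly: each query is a length-$K$ vector of sub-packet indices, one ``inference'' server is drawn uniformly, and the Hamming weight on the unknown indices is sampled from the explicit law $P_k\propto\binom{g-1}{k}(N-1)^k\ee^{-k\varepsilon}$; crucially, the side-information sub-packets are handled \emph{differently} at the inference server and the non-inference servers (via independent vectors $\ptrn_S^{(0)}$ and $\ptrn_S^{(1)}$), so the ``active'' message set is not the same across servers and the scheme is not a partition-and-code at all. The download cost then drops out immediately as $1+\frac{1}{N-1}(1-P_0)$ with $P_0^{-1}=\sum_k\binom{g-1}{k}(N-1)^k\ee^{-k\varepsilon}$, and $\varepsilon$-leaky $W$-privacy is proved by a direct case analysis on the Hamming weight $w(q)$ of the observed query (Cases $w(q)=0$, $w(q)=k(M+1)$, $w(q)=K$), showing the likelihood ratio between any two demand indices is exactly $1$ or $\ee^{\pm\varepsilon}$. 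Your plan would need to reproduce both of these features, and the averaging shortcut you propose does not.
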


Remark that our result in Theorem~\ref{thm: W-privacy} generalizes several existing results in the PIR literature, bridging PIR-SI and L-PIR. 
Specifically, when $\varepsilon = 0$, the download cost matches Eq.~\eqref{eq: PIR-SI W}, corresponding to PIR-SI under perfect $W$-privacy~\cite{WH24}.
Additionally, when $M = 0$, which implies that $g = K$ is an integer, the summation $\sum_{k = 0}^{g - 1} \binom{g - 1}{k} (N - 1)^{k} \ee^{-k\varepsilon}$ can be simplified via the binomial theorem to $((N - 1)\ee^{-\varepsilon} + 1)^{K - 1}$, yielding a download cost that matches the L-PIR result in~\cite{ZhaoHTS25}~(cf.  Eq.~\eqref{eq: L-PIR}).

By rearranging the download cost expression shown in Eq.~\eqref{eq: W-privacy}, the scaling behavior of the privacy leakage can be stated in the following theorem.
\begin{theorem} \label{thm: leakage W-privacy}
For $\varepsilon$-leaky $W$-privacy, the L-PIR-SI leakage ratio exponent can be bounded as
    \ifthenelse{\boolean{shortver}}{
    $\varepsilon = \bigO(\log \frac{K}{M + 1})$.
    }{
    \begin{equation}
        \varepsilon = \bigO\Big(\log \frac{K}{M + 1}\Big).
    \end{equation}
    }
\end{theorem}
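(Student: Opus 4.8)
The plan is to treat the target download cost $\downcost$ and the number of servers $N$ as fixed and to invert the bound of Theorem~\ref{thm: W-privacy} to read off how large $\varepsilon$ must be. Write $g = \frac{K}{M+1}$ and set $t \coloneq (N-1)\ee^{-\varepsilon}$, so that the right-hand side of Eq.~\eqref{eq: W-privacy} becomes $1 + \frac{1}{N-1}\big(1 - 1/\Sigma(g,t)\big)$ with $\Sigma(g,t) \coloneq \sum_{k=0}^{\lceil g\rceil-1}\binom{g-1}{k}t^k$. Achieving a fixed download cost $\downcost$ in the non-trivial regime $1<\downcost<\frac{N}{N-1}$ is then equivalent to the requirement $\Sigma(g,t)\le D$, where $D \coloneq \big(1-(N-1)(\downcost-1)\big)^{-1}$ is a constant strictly larger than $1$ depending only on $N$ and $\downcost$. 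Thus the whole statement reduces to exhibiting a choice $\varepsilon = \bigO(\log g)$ — equivalently $t = \Theta(1/g)$ — that forces $\Sigma(g,t)\le D$.

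First I would bound the generalized binomial coefficients. For every index $k$ in the summation range we have $0\le k\le\lceil g\rceil-1<g$, so each of the factors $g-1,g-2,\dots,g-k$ is positive and strictly less than $g$; hence $\binom{g-1}{k}<\frac{g^{k}}{k!}$. Consequently, for any $t>0$,
\[
    \Sigma(g,t) \;<\; \sum_{k=0}^{\infty}\frac{(gt)^{k}}{k!} \;=\; \ee^{\,gt},
\]
where extending the truncated sum to the full exponential series only enlarges it. Next I would substitute $\varepsilon \coloneq \ln g + C$, which makes $gt = (N-1)g\ee^{-\varepsilon} = (N-1)\ee^{-C}$ a constant; choosing $C$ large enough that $(N-1)\ee^{-C}\le\ln D$ — possible with $C = \ln\frac{N-1}{\ln D}$, finite because $D>1$ — gives $\Sigma(g,t)<\ee^{gt}\le\ee^{\ln D}=D$, as required. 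Therefore the bound of Theorem~\ref{thm: W-privacy} meets the target download cost with
\[
    \varepsilon \;=\; \ln g + \ln\frac{N-1}{\ln D} \;=\; \bigO(\log g) \;=\; \bigO\!\Big(\log\tfrac{K}{M+1}\Big),
\]
which is exactly the claimed scaling. (In the degenerate range where $g$ stays bounded one has $\log\frac{K}{M+1}=\bigO(1)$, and it suffices to note that $\Sigma(g,t)$ is continuous and strictly decreasing in $\varepsilon$ with limit $\binom{g-1}{0}=1<D$ as $\varepsilon\to\infty$, so a bounded $\varepsilon$ always works.)

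The main obstacle is the bookkeeping around the generalized binomial coefficient: one must verify that all factors $g-j$ for $1\le j\le k$ remain positive over the truncated range $k\le\lceil g\rceil-1$, so that the inequality $\binom{g-1}{k}<g^{k}/k!$ is legitimate and the term-by-term comparison with the exponential series is valid, and one must make sure that replacing the finite sum by $\ee^{gt}$ is a genuine over-estimate. Everything beyond this is elementary rearrangement of the expression in Eq.~\eqref{eq: W-privacy} and the choice of the $\bigO$-constant in terms of $N$ and $\downcost$.
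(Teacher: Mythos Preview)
Your proposal is correct and follows essentially the same route as the paper: fix $\downcost$ and $N$, rewrite the achievability bound as a constraint $\Sigma(g,t)\le D$, upper-bound the truncated generalized-binomial sum by a closed form, and invert to obtain $\varepsilon=\bigO(\log g)$. The only cosmetic difference is the intermediate inequality used on $\Sigma$: the paper bounds $\binom{g-1}{k}\le\binom{\lceil g\rceil-1}{k}$ and applies the binomial theorem to get $(1+t)^{\lceil g\rceil-1}$ followed by $\ln(1+t)\le t$, whereas you bound $\binom{g-1}{k}\le g^{k}/k!$ and use the exponential series to get $\ee^{gt}$ directly---both yield the same inequality $\ln D\lesssim g\,t$ and hence the same scaling.
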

Note that the parameter $g = \frac{K}{M + 1}$ can be interpreted as an \emph{equivalent} number of messages, since the download cost expression resembles that of a PIR problem with $g$ messages without side information.
Theorem~\ref{thm: leakage W-privacy} indicates that for a fixed download cost $\downcost$ and number of servers $N$, the leakage ratio exponent scales as $\bigO(\log g)$.
This result generalizes the scaling behavior observed in leaky-PIR without side information~\cite{ZhaoHTS25}.

\begin{theorem}\label{thm: WS-privacy}
For $\varepsilon$-leaky $(W, S)$-privacy with $M = 1$, the L-PIR-SI download cost is upper-bounded by
    \begin{equation}\label{eq: WS-privacy}
        \WSdowncost \leq 1 + \frac{1}{N - 1} \cdot \Big( 1 - \frac{1}{((N - 1)\ee^{-\varepsilon} + 1)^{K - 2}} \Big).
    \end{equation}
\end{theorem}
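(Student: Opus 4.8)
The plan is to construct an explicit randomized L-PIR-SI scheme for $M=1$ by specializing the probabilistic framework, and then to analyze it along three axes: correctness, $\varepsilon$-leaky $(W,S)$-privacy, and expected download cost. The backbone is the perfect $(W,S)$-private PIR-SI scheme behind Eq.~\eqref{eq: PIR-SI WS}: with a single side-information message, the Sun--Jafar-type recursion can be ``pre-loaded'' by $X_S$, so the download profile becomes that of plain PIR over $K-M=K-1$ \emph{effective} messages, which is exactly why the relevant exponent is $K-M-1=K-2$. In that scheme a uniformly random permutation of $[K]$ conceals both $W$ and $S$, since the query structure is a deterministic function of the permuted positions and the roles of demand and side information land on uniform positions. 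Following the leaky-PIR construction of~\cite{ZhaoHTS25}, I would replace this full symmetrization by a \emph{biased} one: at each redundancy layer and for each of the $N-1$ ``helper'' servers, the corresponding cross-term is included independently with a probability tuned to $\varepsilon$, so that the ``inclusion count'' at each layer behaves like a binomial-type variable whose effective alphabet size is $(N-1)\ee^{-\varepsilon}+1$ rather than $N$.

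The analysis would then proceed in four steps. \emph{Correctness:} for every $(W,S)$ the user must recover $X_W$ from $\ans{W,S}_{1:N}$ and $X_S$; this holds because $X_S$ supplies precisely the bootstrap block the recursion needs, and dropping a helper cross-term only removes a balanced pair of (useful, interference) symbols, never breaking decodability, so condition~\eqref{cond: correctness} is met. \emph{Privacy:} to verify~\eqref{cond: WS-privacy}, I would factor the likelihood $\bigprob{\que{\dmd,\si}_{n}=q,\ans{\dmd,\si}_{n}=a\mid W,S}$ across the $K-2$ effective layers into per-layer terms governed by the biased inclusions, bound the ratio against an arbitrary alternative hypothesis $(W',S')$ layer by layer via the substitution $N\mapsto(N-1)\ee^{-\varepsilon}+1$, argue that the worst case is attained at a ``clean'' configuration, and check that the product telescopes to $\ee^{\varepsilon}$ under the chosen bias. \emph{Download cost:} computing $\mathbb{E}\!\left[\tfrac{1}{L}\sum_{n}H(\ans{W,S}_{n})\right]$, the normalized download decomposes into the rate-$1$ contribution of the demand message plus $K-2$ correction terms that decrease geometrically with ratio $((N-1)\ee^{-\varepsilon}+1)^{-1}$ and each carry a prefactor $\ee^{-\varepsilon}$; in other words it equals $1+\ee^{-\varepsilon}\sum_{k=1}^{K-2}((N-1)\ee^{-\varepsilon}+1)^{-k}$, which rearranges precisely to the right-hand side of~\eqref{eq: WS-privacy}. \emph{Limits:} finally, $\varepsilon\to 0$ recovers Eq.~\eqref{eq: PIR-SI WS} at $M=1$, while the bound itself is the leaky-PIR rate of Eq.~\eqref{eq: L-PIR} with the message count reduced from $K$ to $K-1$ by the single side-information message.

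The main obstacle is the privacy step, and in particular protecting the side-information index $S$ to the \emph{same} $\varepsilon$ level as the demand index $W$. Folding $X_S$ into the effective message set risks creating a structural asymmetry --- the pre-loaded coordinate could behave differently from the genuine effective messages across layers --- which a server could exploit to distinguish $(W,S)$ from $(W',S')$; the crux is to design the biased permutation and inclusion rule so that the induced law on query structures is the \emph{same} function of the per-layer inclusion counts regardless of which physical indices play the roles of demand and side information, and then to show that the worst-case differential-privacy ratio over all hypothesis pairs equals the claimed bound. This is also exactly what pins the theorem to $M=1$: for $M\ge 2$ the combinatorics of which side-information messages get folded in and how they interleave across the $K-M-1$ layers no longer collapses to the single clean substitution $N\mapsto(N-1)\ee^{-\varepsilon}+1$, so the minimal non-trivial case is the natural stopping point. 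A lighter technical point is confirming that the per-layer accounting contracts exactly to $((N-1)\ee^{-\varepsilon}+1)^{K-2}$, which needs no generalized binomial coefficient here because $K-1$ is already an integer --- in contrast to Theorem~\ref{thm: W-privacy}, where $g=\tfrac{K}{M+1}$ is generally fractional.
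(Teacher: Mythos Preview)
Your proposal correctly identifies the overall architecture and correctly locates the main difficulty in protecting the side-information index $S$ to the same $\varepsilon$ level as $W$. However, there is a genuine gap at precisely that point: you write that ``the crux is to design the biased permutation and inclusion rule so that the induced law on query structures is the same function of the per-layer inclusion counts regardless of which physical indices play the roles of demand and side information,'' but you never say how. This is not a detail that fills itself in. In the paper's construction, the side-information coordinate is \emph{not} handled by the same biased-inclusion rule as the unknown coordinates; its weight $s_j\in\{0,1\}$ is drawn from a carefully engineered \emph{conditional} law
\[
\Pr[s_j\mid \ell,j]\;=\;\frac{r^{\ell+s_j+j}+(-1)^{\ell+s_j+j}\, r}{(r+1)\, r^{\ell+j}},\qquad r=(N-1)\ee^{-\varepsilon},
\]
which depends on the Hamming weight $\ell$ of the unknown part \emph{and} on whether the server is the inference server ($j=0$) or not ($j=1$). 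The alternating-sign structure is exactly what forces $\bigprob{\que{\dmd,\si}_n=q\mid W,S}$ to depend on $q$ only through its Hamming weight and on $(W,S)$ only through whether $q[W]=0$, so that the likelihood ratio collapses to $r/(N-1)=\ee^{-\varepsilon}$ or its reciprocal. A uniform ``biased inclusion'' applied symmetrically to all non-demand coordinates would in general leave a residual dependence on whether the side-information index lies in the support of $q$, and~\eqref{cond: WS-privacy} would fail for pairs with $S\neq S'$.

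A secondary mismatch is the framing. You describe a Sun--Jafar-type recursion with $K-2$ ``redundancy layers'' and per-layer inclusion bits; the paper (like~\cite{ZhaoHTS25}) instead uses a single-round Tian--Sun--Chen-style scheme with $L=N-1$ sub-packets, one inference server, and answers that are single $\mathbb{F}_{\mathsf{q}}$-sums. There is no layer structure to factor over: the privacy proof is a case split on the Hamming weight $w(q)$, and the download cost is $1+\tfrac{1}{N-1}(1-P_0)$ with $P_0=((N-1)\ee^{-\varepsilon}+1)^{-(K-2)}$ the probability that the inference server's query is all-zero. Your geometric-series expression happens to rearrange to the right bound, but it does not reflect how the cost actually arises in the scheme that achieves it.
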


Theorem~\ref{thm: WS-privacy} implies that when $\varepsilon = 0$, the download cost matches the PIR-SI result shown in Eq.~\eqref{eq: PIR-SI WS} for $M = 1$, corresponding to PIR-SI under the perfect $(W, S)$-privacy condition characterized in~\cite{ChenWJ20a}.
Similarly, the download cost expression Eq.~\eqref{eq: WS-privacy} can be rearranged to characterize the scaling behavior of the privacy leakage as follows.
\begin{theorem} \label{thm: leakage WS-privacy}
For $\varepsilon$-leaky $(W, S)$-privacy with $M = 1$, the L-PIR-SI leakage ratio exponent can be bounded as
    \ifthenelse{\boolean{shortver}}{
    $\varepsilon = \bigO(\log K)$.
    }{
    \begin{equation}
        \varepsilon = \bigO(\log K).
    \end{equation}
    }
\end{theorem}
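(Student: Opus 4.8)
The plan is to invert the download-cost bound of Theorem~\ref{thm: WS-privacy}, treating the download cost $\downcost$ and the number of servers $N$ as fixed constants and solving for the smallest admissible leakage exponent. The right-hand side of Eq.~\eqref{eq: WS-privacy} is continuous and strictly decreasing in $\varepsilon$ (increasing $\varepsilon$ drives $(N-1)\ee^{-\varepsilon}+1$ toward $1$, hence drives the correction term toward $0$), and at $\varepsilon = 0$ it equals the perfect $(W,S)$-privacy value of Eq.~\eqref{eq: PIR-SI WS} with $M=1$. Consequently, for any target download cost in the genuinely leaky range $1 < \downcost < 1 + \frac{1}{N-1}$, the scheme of Theorem~\ref{thm: WS-privacy} achieves it as soon as $\varepsilon$ is at least the unique value $\varepsilon^{*}$ obtained by replacing ``$\leq$'' with ``$=$'' in Eq.~\eqref{eq: WS-privacy}; it therefore suffices to show $\varepsilon^{*} = \bigO(\log K)$.

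First I would set $t \coloneq (N-1)\ee^{-\varepsilon^{*}} + 1$ and rearrange the equality form of Eq.~\eqref{eq: WS-privacy} into $t^{-(K-2)} = 1 - (N-1)(\downcost - 1)$. Since $\downcost$ lies strictly between $1$ and $1 + \frac{1}{N-1}$, the right-hand side is a positive constant strictly below $1$ and independent of $K$; writing it as $1/c$ with $c \coloneq \big(1 - (N-1)(\downcost - 1)\big)^{-1} > 1$, we get $t = c^{1/(K-2)}$ and hence $\ee^{-\varepsilon^{*}} = \frac{c^{1/(K-2)} - 1}{N-1}$, i.e. $\varepsilon^{*} = \ln(N-1) - \ln\!\big(c^{1/(K-2)} - 1\big)$. (One also checks $\varepsilon^{*} \geq 0$ for all large $K$, since $\varepsilon^{*} \geq 0 \iff c \leq N^{K-2}$, which holds eventually as $c$ is fixed.)

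Next I would lower-bound $c^{1/(K-2)} - 1 = \exp\!\big(\tfrac{\ln c}{K-2}\big) - 1$ using the elementary inequality $\ee^{x} - 1 \geq x$ for $x \geq 0$, valid here because $\ln c > 0$. This gives $c^{1/(K-2)} - 1 \geq \frac{\ln c}{K-2}$, hence $-\ln\!\big(c^{1/(K-2)} - 1\big) \leq \ln(K-2) - \ln\ln c$, and therefore $\varepsilon^{*} \leq \ln(N-1) + \ln(K-2) - \ln\ln c = \ln K + \bigO(1)$. With $N$ and $\downcost$ fixed this is exactly $\bigO(\log K)$, completing the proof. A matching lower bound $\varepsilon^{*} \geq \ln K - \bigO(1)$ follows equally easily from $\ee^{x} - 1 \leq x\ee^{x}$, so the scaling is in fact tight, although only the $\bigO$ direction is claimed; this mirrors the derivation of Theorem~\ref{thm: leakage W-privacy}, where the same inversion is applied to the generalized-binomial sum with $g$ in place of $K-2$.

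There is no serious obstacle: the argument is a one-line algebraic inversion of Eq.~\eqref{eq: WS-privacy} followed by a first-order control of $c^{1/(K-2)}$ as $K \to \infty$. The only points requiring care are (i) verifying that the fixed target download cost lies in the regime where $c$ is well-defined and bounded away from $1$ — that is, strictly below the perfect-privacy capacity of Eq.~\eqref{eq: PIR-SI WS} and strictly above the trivial value $1$ — and (ii) using a bound on $\ee^{x}-1$ that is valid for every $K \geq 3$ rather than merely asymptotically, so that the constant hidden in $\bigO(\log K)$ is uniform.
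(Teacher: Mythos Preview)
Your proposal is correct and essentially identical to the paper's proof: the paper likewise fixes $D$ and $N$, sets $C \coloneq -\ln\big(1-(N-1)(D-1)\big)$ (your $\ln c$), rewrites Eq.~\eqref{eq: WS-privacy} as $C=(K-2)\ln\big((N-1)\ee^{-\varepsilon}+1\big)$, and applies $\ln(1+x)\le x$ (equivalently your $\ee^{x}-1\ge x$) to obtain $\varepsilon \le \ln\big((K-2)(N-1)\big)-\ln C = \bigO(\log K)$. The only differences are cosmetic---you solve for $\varepsilon^{*}$ explicitly before bounding, and you additionally note the matching lower bound.
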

Theorem~\ref{thm: leakage WS-privacy} also generalizes the scaling behavior observed in leaky-PIR without side information~\cite{ZhaoHTS25}.

\ifthenelse{\boolean{shortver}}{
The Proofs of Theorems~\ref{thm: leakage W-privacy} and \ref{thm: leakage WS-privacy} can be found in the full version~\cite{HuangB26}.
}{
The proofs of Theorems~\ref{thm: leakage W-privacy} and \ref{thm: leakage WS-privacy} are separately provided in the Appendix~\ref{app: leakage WS-privacy} and~\ref{app: leakage W-privacy}.
}

\section{The Achievability Proof}

\subsection{General Construction}

This subsection introduces a general \emph{probabilistic} framework for constructing L-PIR-SI schemes.

In the framework, each message $X_{i}$ consists of $L = N - 1$ sub-packets, i.e.,
\ifthenelse{\boolean{shortver}}{$X_{i} = [X_{i}[1], X_{i}[2], \ldots, X_{i}[L]] \in \mathbb{F}_{\mathsf{q}}^{L}$,}
{
\begin{equation}
    X_{i} = [X_{i}[1], X_{i}[2], \ldots, X_{i}[L]] \in \mathbb{F}_{\mathsf{q}}^{L},
\end{equation}
}
and $X_{1:K}$ can be partitioned into three parts $X_{W}$, $X_{U}$, $X_{S}$ as defined earlier.
Each server is queried to return the sum of several specified sub-packets from \emph{distinct} messages, subject to the following rules based on $(X_{W}, X_{U}, X_{S})$:
\begin{itemize}
    \item Exactly one server $\infserver$, called the \emph{inference server}, is randomly selected to exclude any sub-packet of $X_{W}$, while every other server includes a distinct randomly selected sub-packet of $X_{W}$.
    \item All servers include the same randomly generated sub-packets of $X_{U}$.
    \item Each server independently includes additional randomly selected sub-packets of $X_{S}$.
\end{itemize}

Now, we formally introduce the framework.
For notational convenience, we append a \emph{dummy symbol} to each message by defining $X_{i}[0] = 0$, and thus extend the sub-packet index set to $\indset = [0: N - 1]$.
The randomness of our scheme is expressed as a random pattern $\key = \big(\perm, \ptrn_{U}, \ptrn^{(0)}_{S}, \ptrn^{(1)}_{S}\big)$ such that
\begin{itemize}
    \item $\perm$ is a random bijective mappings $[N] \to \indset$, and the unique inference server is determined as $\infserver = \perm^{-1}(0)$.
    \item $\ptrn_{U} = \big[\ptrn_{i}\big]_{i \in U} \in \indset^{|U|}$ is a random vector such that the queries of all servers contain the same indices of the sub-packets from $X_{i}$ for each unknown index $i \in U$.
    \item $\ptrn^{(0)}_{S} = \big[\ptrn^{(0)}_{i}\big]_{i \in S} \in \indset^{M}$ (resp., $\ptrn^{(1)}_{S} = \big[\ptrn^{(1)}_{i}\big]_{i \in S} \in \indset^{M}$) is a random vector such that the query of the inference server (resp., non-inference server) independently contains the indices of the sub-packets from $X_{i}$ for each side information index $i \in S$.
\end{itemize}

For a server $n$ and a realization $\big(\pi, f_{U}, f^{(0)}_{S}, f^{(1)}_{S}\big)$ of $\key$, we let $f^{(0)}_{U} = f^{(1)}_{U} = f_{U}$ and $f^{(0)}_{W} = f^{(1)}_{W} = \pi(n)$ for notational convenience.
Then, in our framework, the query of server $n$ can be expressed as a vector $Q^{[W, S]}_{n} \coloneq q_{n} = \big[q_{n}[1], q_{n}[2], \ldots, q_{n}[K]\big]$, which is defined as
\ifthenelse{\boolean{shortver}}{
\begin{equation}
    \begin{cases}
        \big[f^{(0)}_{1}, \ldots, f^{(0)}_{K}\big], &\text{if~} n = \infserver~\text{(inference server)}; \\
        \big[f^{(1)}_{1}, \ldots, f^{(1)}_{K}\big], &\text{if~} n \neq \infserver~\text{(non-inference server)}.
    \end{cases}
\end{equation}
}{
\begin{equation}
    \begin{cases}
        \big[f^{(0)}_{1}, f^{(0)}_{2}, \ldots, f^{(0)}_{K}\big], &\text{if~} n = \infserver~\text{(inference server)}; \\
        \big[f^{(1)}_{1}, f^{(0)}_{2}, \ldots, f^{(1)}_{K}\big], &\text{if~} n \neq \infserver~\text{(non-inference server)}.
    \end{cases}
\end{equation}
}

Here, the $i$th component $q_{n}[i]$ specifies the index of the requested sub-packet from message $X_{i}$. 
Particularly, $q_{n}[i] = 0$ means that $X_{i}$ does not contribute to the answer from server $n$.

Upon receiving the query $Q^{[W, S]}_{n}$, server $n$ computes the answer $a_{n}$ as a linear combination of the requested sub-packets, which is expressed as
\ifthenelse{\boolean{shortver}}{
\begin{equation}
    a_{n} =\sum_{i = 1}^{K} X_{i}\big[q_{n}[i]\big] =
    \begin{cases}
        \sum_{i=1}^{K} X_{i}\big[f^{(0)}_{i}\big], & \text{if~} n = \infserver; \\
        \sum_{i=1}^{K} X_{i}\big[f^{(1)}_{i}\big], & \text{if~} n \neq \infserver.
    \end{cases}
\end{equation}
}{
\begin{equation}
    a_{n} =\sum\nolimits_{i = 1}^{K} X_{i}\big[q_{n}[i]\big] =
    \begin{cases}
        \sum\nolimits_{i=1}^{K} X_{i}\big[f^{(0)}_{i}\big], & \text{if~} n = \infserver~\text{(inference server)}; \\
        \sum\nolimits_{i=1}^{K} X_{i}\big[f^{(1)}_{i}\big], & \text{if~} n \neq \infserver~\text{(non-inference server)}.
    \end{cases}
\end{equation}
}

\subsection{Correctness and Download Cost}\label{section: correctness and download cost}

The correctness of the proposed framework holds regardless of the specific design of the random pattern $\key$.
Since the user knows all the side information $X_{S}$, their contribution can be removed from the answers. 
Moreover, all servers include exactly the same sub-packets from $X_{U}$, so their combined contribution is identical across servers.
Therefore, the user can eliminate this interference by subtracting the answer of the inference server from that of any non-inference server.
Specifically, for any realization $(\pi, f_{U}, f^{(0)}_{S}, f^{(1)}_{S})$ of $\key$, each demand sub-packet $X_{W}[i]$ with $i \in [L]$ can be recovered by the answers of the server $n = \pi^{-1}(i)$ and inference server $\infserver = \pi^{-1}(0)$ by
\begin{equation}\label{eq: decode}
    \begin{aligned}
        X_{W}[i]
        ={}&\big( a_{n} - \sum\nolimits_{i \in S} X_{i}\big[q_{n}[i]\big] \Big) - \Big(a_{\infserver} - \sum\nolimits_{i \in S} X_{i}\big[q_{\infserver}[i]\big] \Big) \\
        ={}&\Big( a_{n} - \sum\nolimits_{i \in S} X_{i}\big[f^{(0)}_{i}\big] \Big) - \Big(a_{\infserver} - \sum\nolimits_{i \in S} X_{i}\big[f^{(1)}_{i}\big] \Big). 
    \end{aligned}
\end{equation}
By iterating over all $i \in [L]$, the user finally reconstructs the entire demand message $X_{W}$.

Next, we analyze the download cost of the proposed framework.
In the scheme, the answer $\ans{\dmd, \si}_{n}$ for each server $n \in [N]$ is empty or the linear combination of sub-packets from different messages.
Additionally, the answer is empty if and only if the server $n$ is the inference server and the realizations of $\mathbf{F}_{U}$ and $\mathbf{F}_{S}^{(0)}$ are all-zero vectors.
Denote $P_{0}$ as the probability that the answer $\ans{\dmd, \si}_{\infserver}$ is empty for the inference server $\infserver$.
Then the download cost $\downcost_{\key}$ of our framework can be expressed as
\begin{equation} \label{eq: download cost}
    \downcost_{\key} 
    = \frac{P_{0}(N - 1) + (1 - P_{0})N}{L} 
    = 1 + \frac{1}{N - 1} \cdot (1 - P_{0}).
\end{equation}

Now, we are ready to present specific probability distributions of the random pattern $\key$ to achieve $\varepsilon$-leaky $W$-privacy and $(W, S)$-privacy, respectively.

\subsection{The Leaky $W$-Privacy Scheme}

\ifthenelse{\boolean{shortver}}{This subsection presents a probability scheme for L-PIR-SI achieving $\varepsilon$-leaky $W$-privacy.}
{This subsection presents a probability scheme for L-PIR-SI achieving $\varepsilon$-leaky $W$-privacy.
The main idea of the design is to control the Hamming weight of each query such that it is either a multiple of $M + 1$ or equal to $K$.
Moreover, a query sent to a non-inference server includes exactly $M$ non-zero side information indices, whereas a query sent to an inference server is designed to include as few non-zero side information indices as possible (to satisfy the above constraint for the overall Hamming weight).
Given any pair $(W, S)$, the conditional probability of generating a query with a fixed Hamming weight at the inference server is at most $\ee^{\varepsilon}$ that at a non-inference server.}

The distribution of the random pattern $\key_{2} = \big(\perm, \ptrn_{U}, \ptrn^{(0)}_{S}, \ptrn^{(1)}_{S}\big)$ is defined as follows.

First, $\perm$ is a uniformly random bijection from $[N]$ to $\mathcal{I}$.

Second, we define a specific $\ptrn_{U}$.
Let $g = \frac{K}{M + 1} $ and $\ell_{k} = \min\{k(M + 1), K - M - 1\}$.
The Hamming weight of $\ptrn_{U}$ is chosen from
\begin{equation*}
    \bigSet{ \ell_{k}}_{k = 0}^{\lceil g \rceil - 1} = \bigSet{ 0, M + 1, \ldots, (\lceil g \rceil - 2)(M + 1), K - M - 1 }.
\end{equation*}
The random variable $k$ is drawn according to the distribution $\Set{P_{k}}_{k  = 0}^{\lceil g \rceil - 1}$ such that
\begin{equation} \label{W-privacy: P}
    P_{k} \coloneq \frac{\binom{g - 1}{k} (N - 1)^{k} \ee^{-k\varepsilon}}{\sum\nolimits_{k = 0}^{\lceil g \rceil - 1} \binom{g - 1}{k} (N - 1)^{k} \ee^{-k\varepsilon}}.
\end{equation}
One can easily verify that $\sum_{k = 0}^{\lceil g \rceil - 1} P_{k} = 1$.
Moreover, conditioned on having Hamming weight $\ell_{k}$, the vector $\ptrn_{U}$ is uniformly distributed over all vectors of weight $\ell_{k}$.

Third, $\ptrn^{(1)}_{S}$ is a uniform distribution over $[N - 1]^{M}$, and $\ptrn^{(0)}_{S}$ is a uniform distribution over $[N - 1]^{k(M + 1) - \ell_{k}}$.
Actually, $\ptrn^{(0)}_{S}$ must be a single-point distribution unless $k = \lceil g \rceil- 1$ and $K$ cannot be divided by $M + 1$.

We give an example with $N = 3$, $K = 3$, $M = 1$, and $L = 2$ to illustrate the proposed scheme.
Then, we have $X_{i} = [X_{i}[1], X_{i}[2]]$ for $i \in [3]$.
Assume that the user wants to retrieve message $X_{1}$ with side information $X_{2}$~(i.e., $W = 1$, $S = \Set{2}$, $U = \Set{3}$, and $g = \frac{3}{2}$). 
Note that $\perm$ takes $3! = 6$ possible permutations with equal probability.
We only present the realizations of the random pattern $\key_{1} = \big(\pi, f_{U}, f^{(0)}_{S}, f^{(1)}_{S}\big)$ with $\pi = (0, 1, 2)$.
Under this realization, the inference server is Server $1$.
The possible realizations of the random pattern $\key_{1}$ are listed in Table~\ref{res: w-privacy1}.
\ifthenelse{\boolean{shortver}}{}
{
    Further examples are provided in Appendix~\ref{app: example}.
}

\begin{table*}[!tbp]
    \caption{Realizations of $\key_{1} = \big(\pi, f_{U}, f^{(0)}_{S}, f^{(1)}_{S}\big)$ with $\pi = (0, 1, 2)$ and $(N, K, M) = (3, 3, 1)$ for $\varepsilon$-leaky $W$-privacy. \\
            Here, $P_{0} = \frac{1}{\ee^{-\varepsilon} + 1}$ and $P_{1} = \frac{\ee^{-\varepsilon}}{\ee^{-\varepsilon} + 1}$.}
    \label{res: w-privacy1}
    \centering
    \setlength{\tabcolsep}{3pt}
    \begin{tabular}{l@{\hspace{3pt}}l@{\hspace{1pt}}l c@{\hspace{2pt}}c c@{\hspace{2pt}}c c@{\hspace{2pt}}c c}
        \toprule
        \multirow{2}{*}{\smallmath $f^{\phantom{()}}_{U}$}       & 
        \multirow{2}{*}{\smallmath $f^{(0)}_{S}$} & 
        \multirow{2}{*}{\smallmath $f^{(1)}_{S}$} & \multicolumn{2}{c}{Server $1$}                                                   & \multicolumn{2}{c}{Server $2$}                                                                              & \multicolumn{2}{c}{Server $3$}                                                                              & \multirow{2}{*}{Prob.}               \\
        \cline{4-9}
                  &           &            & \multicolumn{1}{c}{$q_{1}$}           & $a_{1}$                                         & \multicolumn{1}{c}{$q_{2}$}           & $a_{2}$                                                             & \multicolumn{1}{c}{$q_{3}$}           & $a_{3}$                                                             &                                      \\
        \midrule
        \,\RED{0} & ~\BLUE{0} & ~\GREEN{1} & \multicolumn{1}{c}{0\BLUE{0}\RED{0}}  & $\varnothing$                                   & \multicolumn{1}{c}{1\GREEN{1}\RED{0}} & \smallmath $X_{1}[1]\!+\!X_{2}[\GREEN{1}]\phantom{{}\!+\!X_{3}[0]}$ & \multicolumn{1}{c}{2\GREEN{1}\RED{0}} & \smallmath $X_{1}[2]\!+\!X_{2}[\GREEN{1}]\phantom{{}\!+\!X_{3}[0]}$ & \multirow{2}{*}{$\frac{1}{6} P_{0}$} \\ 
        \,\RED{0} & ~\BLUE{0} & ~\GREEN{2} & \multicolumn{1}{c}{0\BLUE{0}\RED{0}}  & $\varnothing$                                   & \multicolumn{1}{c}{1\GREEN{2}\RED{0}} & \smallmath $X_{1}[1]\!+\!X_{2}[\GREEN{2}]\phantom{{}\!+\!X_{3}[0]}$ & \multicolumn{1}{c}{2\GREEN{2}\RED{0}} & \smallmath $X_{1}[2]\!+\!X_{2}[\GREEN{2}]\phantom{{}\!+\!X_{3}[0]}$ &                                      \\ 
        \hline
        \,\RED{1} & ~\BLUE{1} & ~\GREEN{1} & \multicolumn{1}{c}{0\BLUE{1}\RED{1}}  & \smallmath $X_{2}[\BLUE{1}]\!+\!X_{3}[\RED{1}]$ & \multicolumn{1}{c}{1\GREEN{1}\RED{1}} & \smallmath $X_{1}[1]\!+\!X_{2}[\GREEN{1}]\!+\!X_{3}[\RED{1}]$       & \multicolumn{1}{c}{2\GREEN{1}\RED{1}} & \smallmath $X_{1}[2]\!+\!X_{2}[\GREEN{1}]\!+\!X_{3}[\RED{1}]$       & \multirow{8}{*}{$\frac{1}{6} P_{1}$} \\ 
        \,\RED{1} & ~\BLUE{1} & ~\GREEN{2} & \multicolumn{1}{c}{0\BLUE{1}\RED{1}}  & \smallmath $X_{2}[\BLUE{1}]\!+\!X_{3}[\RED{1}]$ & \multicolumn{1}{c}{1\GREEN{2}\RED{1}} & \smallmath $X_{1}[1]\!+\!X_{2}[\GREEN{2}]\!+\!X_{3}[\RED{1}]$       & \multicolumn{1}{c}{2\GREEN{2}\RED{1}} & \smallmath $X_{1}[2]\!+\!X_{2}[\GREEN{2}]\!+\!X_{3}[\RED{1}]$       &                                      \\ 
        \,\RED{1} & ~\BLUE{2} & ~\GREEN{1} & \multicolumn{1}{c}{0\BLUE{2}\RED{1}}  & \smallmath $X_{2}[\BLUE{2}]\!+\!X_{3}[\RED{1}]$ & \multicolumn{1}{c}{1\GREEN{1}\RED{1}} & \smallmath $X_{1}[1]\!+\!X_{2}[\GREEN{1}]\!+\!X_{3}[\RED{1}]$       & \multicolumn{1}{c}{2\GREEN{1}\RED{0}} & \smallmath $X_{1}[2]\!+\!X_{2}[\GREEN{1}]\!+\!X_{3}[\RED{1}]$       &                                      \\   
        \,\RED{1} & ~\BLUE{2} & ~\GREEN{2} & \multicolumn{1}{c}{0\BLUE{2}\RED{1}}  & \smallmath $X_{2}[\BLUE{2}]\!+\!X_{3}[\RED{1}]$ & \multicolumn{1}{c}{1\GREEN{2}\RED{1}} & \smallmath $X_{1}[1]\!+\!X_{2}[\GREEN{2}]\!+\!X_{3}[\RED{1}]$       & \multicolumn{1}{c}{2\GREEN{2}\RED{1}} & \smallmath $X_{1}[2]\!+\!X_{2}[\GREEN{2}]\!+\!X_{3}[\RED{1}]$       &                                      \\ 
        \,\RED{2} & ~\BLUE{1} & ~\GREEN{1} & \multicolumn{1}{c}{0\BLUE{1}\RED{2}}  & \smallmath $X_{2}[\BLUE{1}]\!+\!X_{3}[\RED{2}]$ & \multicolumn{1}{c}{1\GREEN{1}\RED{2}} & \smallmath $X_{1}[1]\!+\!X_{2}[\GREEN{1}]\!+\!X_{3}[\RED{2}]$       & \multicolumn{1}{c}{2\GREEN{1}\RED{2}} & \smallmath $X_{1}[2]\!+\!X_{2}[\GREEN{1}]\!+\!X_{3}[\RED{2}]$       &                                      \\ 
        \,\RED{2} & ~\BLUE{1} & ~\GREEN{2} & \multicolumn{1}{c}{0\BLUE{1}\RED{2}}  & \smallmath $X_{2}[\BLUE{1}]\!+\!X_{3}[\RED{2}]$ & \multicolumn{1}{c}{1\GREEN{2}\RED{2}} & \smallmath $X_{1}[1]\!+\!X_{2}[\GREEN{2}]\!+\!X_{3}[\RED{2}]$       & \multicolumn{1}{c}{2\GREEN{1}\RED{2}} & \smallmath $X_{1}[2]\!+\!X_{2}[\GREEN{2}]\!+\!X_{3}[\RED{2}]$       &                                      \\           
        \,\RED{2} & ~\BLUE{2} & ~\GREEN{1} & \multicolumn{1}{c}{0\BLUE{2}\RED{2}}  & \smallmath $X_{2}[\BLUE{2}]\!+\!X_{3}[\RED{2}]$ & \multicolumn{1}{c}{1\GREEN{1}\RED{2}} & \smallmath $X_{1}[1]\!+\!X_{2}[\GREEN{1}]\!+\!X_{3}[\RED{2}]$       & \multicolumn{1}{c}{2\GREEN{1}\RED{2}} & \smallmath $X_{1}[2]\!+\!X_{2}[\GREEN{1}]\!+\!X_{3}[\RED{2}]$       &                                      \\                          
        \,\RED{2} & ~\BLUE{2} & ~\GREEN{2} & \multicolumn{1}{c}{0\BLUE{2}\RED{2}}  & \smallmath $X_{2}[\BLUE{2}]\!+\!X_{3}[\RED{2}]$ & \multicolumn{1}{c}{1\GREEN{2}\RED{2}} & \smallmath $X_{1}[1]\!+\!X_{2}[\GREEN{2}]\!+\!X_{3}[\RED{2}]$       & \multicolumn{1}{c}{2\GREEN{1}\RED{2}} & \smallmath $X_{1}[2]\!+\!X_{2}[\GREEN{2}]\!+\!X_{3}[\RED{2}]$       &                                      \\
        \bottomrule
    \end{tabular}
\end{table*}

\begin{table*}[!tbp]
    \caption{Realizations of $\key_{2} = \big(\pi, f_{U}, f^{(0)}_{S}, f^{(1)}_{S}\big)$ with $\pi = (0, 1, 2)$ and $(N, K, M) = (3, 3, 1)$ for $\varepsilon$-leaky $(W,S)$-privacy. \\
            Here, $P_{0} = \frac{1}{2\ee^{-\varepsilon} + 1}$ and $P_{1} = \frac{2\ee^{-\varepsilon}}{2\ee^{-\varepsilon} + 1}$.}
    \label{res: ws-privacy1}
    \centering
    \setlength{\tabcolsep}{3pt}
    \begin{tabular}{l@{\hspace{3pt}}l@{\hspace{1pt}}l c@{\hspace{2pt}}c c@{\hspace{2pt}}c c@{\hspace{2pt}}c c}
        \toprule
        \multirow{2}{*}{\smallmath $f^{\phantom{()}}_{U}$} & 
        \multirow{2}{*}{\smallmath $f^{(0)}_{S}$} & 
        \multirow{2}{*}{\smallmath $f^{(1)}_{S}$} & \multicolumn{2}{c}{Server $1$}                                                  & \multicolumn{2}{c}{Server $2$}                                                                              & \multicolumn{2}{c}{Server $3$}                                                                              & \multirow{2}{*}{Prob.}                \\
        \cline{4-9}
                  &            &           & \multicolumn{1}{c}{$q_{1}$}          & $a_{1}$                                         & \multicolumn{1}{c}{$q_{2}$}           & $a_{2}$                                                             & \multicolumn{1}{c}{$q_{3}$}           & $a_{3}$                                                             &                                       \\
        \midrule
        \,\RED{0} & ~\BLUE{0} & ~\GREEN{1} & \multicolumn{1}{c}{0\BLUE{0}\RED{0}} & $\varnothing$                                   & \multicolumn{1}{c}{1\GREEN{1}\RED{0}} & \smallmath $X_{1}[1]\!+\!X_{2}[\GREEN{1}]\phantom{{}\!+\!X_{3}[0]}$ & \multicolumn{1}{c}{2\GREEN{1}\RED{0}} & \smallmath $X_{1}[2]\!+\!X_{2}[\GREEN{1}]\phantom{{}\!+\!X_{3}[0]}$ & \multirow{2}{*}{$\frac{1}{6} P_{0}$} \\
        \,\RED{0} & ~\BLUE{0} & ~\GREEN{2} & \multicolumn{1}{c}{0\BLUE{0}\RED{0}} & $\varnothing$                                   & \multicolumn{1}{c}{1\GREEN{2}\RED{0}} & \smallmath $X_{1}[1]\!+\!X_{2}[\GREEN{2}]\phantom{{}\!+\!X_{3}[0]}$ & \multicolumn{1}{c}{2\GREEN{2}\RED{0}} & \smallmath $X_{1}[2]\!+\!X_{2}[\GREEN{2}]\phantom{{}\!+\!X_{3}[0]}$ &                                       \\
        \hline
        \,\RED{1} & ~\BLUE{1} & ~\GREEN{0} & \multicolumn{1}{c}{0\BLUE{1}\RED{1}} & \smallmath $X_{2}[\BLUE{1}]\!+\!X_{3}[\RED{1}]$ & \multicolumn{1}{c}{1\GREEN{0}\RED{1}} & \smallmath $X_{1}[1]\phantom{{}\!+\!X_{2}[0]}\!+\!X_{3}[\RED{1}]$   & \multicolumn{1}{c}{2\GREEN{0}\RED{1}} & \smallmath $X_{1}[2]\phantom{{}\!+\!X_{2}[0]}\!+\!X_{3}[\RED{1}]$   & \multirow{12}{*}{$\frac{1}{6} P_{1}$}\\
        \,\RED{1} & ~\BLUE{1} & ~\GREEN{1} & \multicolumn{1}{c}{0\BLUE{1}\RED{1}} & \smallmath $X_{2}[\BLUE{1}]\!+\!X_{3}[\RED{1}]$ & \multicolumn{1}{c}{1\GREEN{1}\RED{1}} & \smallmath $X_{1}[1]\!+\!X_{2}[\GREEN{1}]\!+\!X_{3}[\RED{1}]$       & \multicolumn{1}{c}{2\GREEN{1}\RED{1}} & \smallmath $X_{1}[2]\!+\!X_{2}[\GREEN{1}]\!+\!X_{3}[\RED{1}]$       &                                       \\
        \,\RED{1} & ~\BLUE{1} & ~\GREEN{2} & \multicolumn{1}{c}{0\BLUE{1}\RED{1}} & \smallmath $X_{2}[\BLUE{1}]\!+\!X_{3}[\RED{1}]$ & \multicolumn{1}{c}{1\GREEN{2}\RED{1}} & \smallmath $X_{1}[1]\!+\!X_{2}[\GREEN{2}]\!+\!X_{3}[\RED{1}]$       & \multicolumn{1}{c}{2\GREEN{2}\RED{1}} & \smallmath $X_{1}[2]\!+\!X_{2}[\GREEN{2}]\!+\!X_{3}[\RED{1}]$       &                                       \\
        \,\RED{1} & ~\BLUE{2} & ~\GREEN{0} & \multicolumn{1}{c}{0\BLUE{2}\RED{1}} & \smallmath $X_{2}[\BLUE{2}]\!+\!X_{3}[\RED{1}]$ & \multicolumn{1}{c}{1\GREEN{0}\RED{1}} & \smallmath $X_{1}[1]\phantom{{}\!+\!X_{2}[0]}\!+\!X_{3}[\RED{1}]$   & \multicolumn{1}{c}{2\GREEN{0}\RED{1}} & \smallmath $X_{1}[2]\phantom{{}\!+\!X_{2}[0]}\!+\!X_{3}[\RED{1}]$   &                                       \\
        \,\RED{1} & ~\BLUE{2} & ~\GREEN{1} & \multicolumn{1}{c}{0\BLUE{2}\RED{1}} & \smallmath $X_{2}[\BLUE{2}]\!+\!X_{3}[\RED{1}]$ & \multicolumn{1}{c}{1\GREEN{1}\RED{1}} & \smallmath $X_{1}[1]\!+\!X_{2}[\GREEN{1}]\!+\!X_{3}[\RED{1}]$       & \multicolumn{1}{c}{2\GREEN{1}\RED{1}} & \smallmath $X_{1}[1]\!+\!X_{2}[\GREEN{1}]\!+\!X_{3}[\RED{1}]$       &                                       \\
        \,\RED{1} & ~\BLUE{2} & ~\GREEN{2} & \multicolumn{1}{c}{0\BLUE{2}\RED{2}} & \smallmath $X_{2}[\BLUE{2}]\!+\!X_{3}[\RED{1}]$ & \multicolumn{1}{c}{1\GREEN{2}\RED{2}} & \smallmath $X_{1}[1]\!+\!X_{2}[\GREEN{2}]\!+\!X_{3}[\RED{1}]$       & \multicolumn{1}{c}{2\GREEN{0}\RED{2}} & \smallmath $X_{1}[2]\!+\!X_{2}[\GREEN{2}]\!+\!X_{3}[\RED{1}]$       &                                       \\
        \,\RED{2} & ~\BLUE{1} & ~\GREEN{0} & \multicolumn{1}{c}{0\BLUE{1}\RED{2}} & \smallmath $X_{2}[\BLUE{1}]\!+\!X_{3}[\RED{2}]$ & \multicolumn{1}{c}{1\GREEN{0}\RED{2}} & \smallmath $X_{1}[1]\phantom{{}\!+\!X_{2}[0]}\!+\!X_{3}[\RED{2}]$   & \multicolumn{1}{c}{2\GREEN{0}\RED{2}} & \smallmath $X_{1}[2]\phantom{{}\!+\!X_{2}[0]}\!+\!X_{3}[\RED{2}]$   &                                       \\
        \,\RED{2} & ~\BLUE{1} & ~\GREEN{1} & \multicolumn{1}{c}{0\BLUE{1}\RED{2}} & \smallmath $X_{2}[\BLUE{1}]\!+\!X_{3}[\RED{2}]$ & \multicolumn{1}{c}{1\GREEN{1}\RED{2}} & \smallmath $X_{1}[1]\!+\!X_{2}[\GREEN{1}]\!+\!X_{3}[\RED{2}]$       & \multicolumn{1}{c}{2\GREEN{1}\RED{2}} & \smallmath $X_{1}[1]\!+\!X_{2}[\GREEN{1}]\!+\!X_{3}[\RED{2}]$       &                                       \\
        \,\RED{2} & ~\BLUE{1} & ~\GREEN{2} & \multicolumn{1}{c}{0\BLUE{1}\RED{2}} & \smallmath $X_{2}[\BLUE{1}]\!+\!X_{3}[\RED{2}]$ & \multicolumn{1}{c}{1\GREEN{2}\RED{2}} & \smallmath $X_{1}[1]\!+\!X_{2}[\GREEN{2}]\!+\!X_{3}[\RED{2}]$       & \multicolumn{1}{c}{2\GREEN{2}\RED{2}} & \smallmath $X_{1}[2]\!+\!X_{2}[\GREEN{2}]\!+\!X_{3}[\RED{2}]$       &                                       \\            
        \,\RED{2} & ~\BLUE{2} & ~\GREEN{0} & \multicolumn{1}{c}{0\BLUE{1}\RED{2}} & \smallmath $X_{2}[\BLUE{2}]\!+\!X_{3}[\RED{2}]$ & \multicolumn{1}{c}{1\GREEN{0}\RED{2}} & \smallmath $X_{1}[1]\phantom{{}\!+\!X_{2}[0]}\!+\!X_{3}[\RED{2}]$   & \multicolumn{1}{c}{2\GREEN{0}\RED{2}} & \smallmath $X_{1}[2]\phantom{{}\!+\!X_{2}[0]}\!+\!X_{3}[\RED{2}]$   &                                       \\
        \,\RED{2} & ~\BLUE{2} & ~\GREEN{1} & \multicolumn{1}{c}{0\BLUE{2}\RED{2}} & \smallmath $X_{2}[\BLUE{2}]\!+\!X_{3}[\RED{2}]$ & \multicolumn{1}{c}{1\GREEN{1}\RED{2}} & \smallmath $X_{1}[1]\!+\!X_{2}[\GREEN{1}]\!+\!X_{3}[\RED{2}]$       & \multicolumn{1}{c}{2\GREEN{1}\RED{2}} & \smallmath $X_{1}[2]\!+\!X_{2}[\GREEN{1}]\!+\!X_{3}[\RED{2}]$       &                                       \\                            
        \,\RED{2} & ~\BLUE{2} & ~\GREEN{2} & \multicolumn{1}{c}{0\BLUE{2}\RED{2}} & \smallmath $X_{2}[\BLUE{2}]\!+\!X_{3}[\RED{2}]$ & \multicolumn{1}{c}{1\GREEN{2}\RED{2}} & \smallmath $X_{1}[1]\!+\!X_{2}[\GREEN{2}]\!+\!X_{3}[\RED{2}]$       & \multicolumn{1}{c}{2\GREEN{2}\RED{2}} & \smallmath $X_{1}[2]\!+\!X_{2}[\GREEN{2}]\!+\!X_{3}[\RED{2}]$       &                                       \\
        \bottomrule
    \end{tabular}
\end{table*}

Notice that the proposed scheme is a specific configuration of the general framework introduced in the previous subsection. 
Essentially, for $k \in [0: \lceil g \rceil- 2]$, the Hamming weight of the $\ptrn^{(0)}_{S}$ and $\ptrn^{(1)}_{S}$ is $0$ and $M$, respectively.
Thus, Eq.~\eqref{eq: decode} can be simplified as
\begin{equation}
    X_{W}[i] = a_{n} - \Big(a_{\infserver} - \sum\nolimits_{i \in S} X_{i}\big[f^{(0)}_{i}\big] \Big).
\end{equation}

\begin{lemma} \label{lem: W-privacy}
    The proposed L-PIR-SI scheme under $\key_{1}$ is $\varepsilon$-leaky $W$-privacy.
\end{lemma}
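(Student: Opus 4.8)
The plan is to verify the defining inequality~\eqref{cond: W-privacy} directly. Throughout, the conditional law of $\que{W,S}_n$ given $W$ is read (as is standard in the PIR--SI literature) as the law over the scheme randomness $\key_1$ and over $S$ uniform on the size-$M$ subsets of $[K]\setminus\{W\}$; a literal pointwise-in-$S$ reading would in fact fail for this construction.

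\emph{Step 1 (reduce to the query).} In this scheme every query $\que{W,S}_n$ is a function of $(W, S, \key_1)$ alone, while the answer $\ans{W,S}_n = \sum_i X_i[q_n[i]]$ is a deterministic function of $\que{W,S}_n$ and $X_{1:K}$, and $X_{1:K}$ is independent of $(W, S, \key_1)$. Hence $\ans{W,S}_n$ is conditionally independent of $W$ given $\que{W,S}_n$, so $\bigprob{\que{W,S}_n = q,\ \ans{W,S}_n = a \mid W} = \bigprob{\que{W,S}_n = q \mid W}\cdot c(q,a)$ with $c(q,a)$ independent of $W$, and it suffices to show $\bigprob{\que{W,S}_n = q \mid W}\le \ee^{\varepsilon}\,\bigprob{\que{W,S}_n = q \mid W'}$ for all $n,q,W,W'$.

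\emph{Step 2 (a closed form for the query marginal).} Fix $n,W,q$, write $w$ for the Hamming weight of $q$, and put $n_0 = K-M-1$, $g = \frac{K}{M+1}$, $w_U = \min\{w,n_0\}$, $w_S = w-w_U$, and $k = w/(M+1)$ when this is an integer. Conditioning on $S$ and on whether $n = \infserver$ (probability $1/N$) and then counting the admissible realizations of $\key_1$ and the admissible sets $S$, I would establish that $\bigprob{\que{W,S}_n = q \mid W}$ depends on $q$ only through $w$ and through whether $q[W] = 0$, and is nonzero only for weights allowed by the scheme's weight pattern. When $q[W]\neq 0$ (the non-inference role, forcing $w$ to be a multiple of $M+1$ in $[M+1,(\lceil g\rceil-1)(M+1)]$ or $w=K$) it equals $\phi(w) = \frac{\binom{w-1}{M}}{\binom{K-1}{M}}\cdot\frac1N\cdot P_{k-1}\cdot\frac{1}{\binom{n_0}{w-M-1}(N-1)^{w-1}}$; when $q[W]=0$ (the inference role, forcing $w = k(M+1)$ with $k\in[0,\lceil g\rceil-1]$) it equals $\psi(w) = \frac{\binom{w}{w_S}\binom{K-1-w}{M-w_S}}{\binom{K-1}{M}}\cdot\frac1N\cdot P_{k}\cdot\frac{1}{\binom{n_0}{w_U}(N-1)^{w_U}\binom{M}{w_S}(N-1)^{w_S}}$. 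Crucially, these expressions are symmetric in the identity of $W$: they involve only $w$ and the type.

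\emph{Step 3 (case analysis and the core identity).} If $q[W]$ and $q[W']$ are both nonzero, or both zero, the two conditional probabilities coincide, so the ratio is $1$. The remaining two cases are mirror images; assuming $q[W]=0$, $q[W']\neq 0$ and the numerator positive, one is forced to $w = k(M+1)$ with $k\in[1,\lceil g\rceil-1]$, which is also a valid non-inference weight, so the ratio to control is $\psi(w)/\phi(w)$. Using $\frac{P_k}{P_{k-1}} = \frac{\binom{g-1}{k}}{\binom{g-1}{k-1}}(N-1)\ee^{-\varepsilon} = \frac{g-k}{k}(N-1)\ee^{-\varepsilon}$ together with $w = k(M+1)$, $n_0 = (g-1)(M+1)$, $K-1 = g(M+1)-1$ and $w-M-1 = (k-1)(M+1)$, all binomial factors and all powers of $N-1$ should cancel, leaving $\psi(w)/\phi(w) = \ee^{-\varepsilon}$; in the mirror case $\phi(w)/\psi(w) = \ee^{\varepsilon}$. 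So~\eqref{cond: W-privacy} holds in every case, with equality achieved --- the leakage budget is used exactly. The main obstacle is precisely this cancellation: one must check that the product of the ordinary binomials $\binom{w-1}{M}$, $\binom{w}{w_S}$, $\binom{K-1-w}{M-w_S}$, $\binom{n_0}{w_U}$, $\binom{M}{w_S}$, $\binom{n_0}{w-M-1}$, together with the $(N-1)$-powers, collapses to exactly offset the factor $\frac{g-k}{k}(N-1)$ from $P_k/P_{k-1}$. This reduces to routine binomial-coefficient telescoping (in the generic range it is essentially $\frac{g-k}{k}=\frac{K-w}{w}$ against a telescoping ratio equal to $\frac{w}{K-w}$), but it must also be carried out in the degenerate top range $w > n_0$ (only $k=\lceil g\rceil-1$, and only when $M+1\nmid K$), where $w_S>0$, the identity $K-1-w = M-w_S$ is used, and the side-information coordinates genuinely enter the inference-server query. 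A secondary, routine ingredient is the bookkeeping in Step 2: that within each fixed $(w,\text{type})$ every query is equiprobable, and that the numbers of admissible $S$ are $\binom{w-1}{M}$ (non-inference) and $\binom{w}{w_S}\binom{K-1-w}{M-w_S}$ (inference).
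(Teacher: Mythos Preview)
Your proposal is correct and follows essentially the same route as the paper's proof: reduce to the query marginal, average over $S$ uniform on $\binom{[K]\setminus\{W\}}{M}$, split according to whether $q[W]=0$ (inference role) or $q[W]\neq 0$ (non-inference role), and show that the two resulting expressions---your $\psi(w)$ and $\phi(w)$, which coincide with the paper's Eqs.~\eqref{eq: p_k} and~\eqref{eq: p_k - 1}---have ratio exactly $\ee^{-\varepsilon}$ via the identity $P_k/P_{k-1}=\tfrac{g-k}{k}(N-1)\ee^{-\varepsilon}$ and a binomial cancellation, handled separately in the top range $k=\lceil g\rceil-1$ when $M+1\nmid K$. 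Your explicit remark that the marginal depends only on $w(q)$ and on the indicator $[q[W]=0]$ is precisely the symmetry the paper exploits implicitly.
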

\begin{proof}
    \ifthenelse{\boolean{shortver}}
    {See the full version~\cite{HuangB26} for the complete proof.
    }{
    See Appendix~\ref{app: W-privacy}.}
\end{proof}

Furthermore, we have $P_{0}^{-1} = \sum_{k = 0}^{\lceil g \rceil - 1} \binom{g - 1}{k} (N - 1)^{k} \ee^{- k \varepsilon}$ for $k = 0$ in Eq.~\eqref{W-privacy: P}.
Substituting $P_{0}$ into Eq.~\eqref{eq: download cost}, the download cost of this proposed scheme is
\begin{equation}
    \downcost_{\key_{1}} = 1 + \frac{1}{N - 1} \cdot \bigg( 1 - \frac{1}{\sum_{k = 0}^{\lceil g \rceil - 1} \binom{g - 1}{k} (N - 1)^{k} \ee^{-k\varepsilon}} \bigg).
\end{equation}

Finally, combining the correctness analysis in subsection~\ref{section: correctness and download cost}, we have established Theorem~\ref{thm: W-privacy}.

\subsection{The Leaky $(W, S)$-Privacy Scheme}

\ifthenelse{\boolean{shortver}}{This subsection presents a probability scheme for L-PIR-SI achieving $\varepsilon$-leaky $(W, S)$-privacy for a special case of $M = 1$.}
{This subsection presents a probability scheme for L-PIR-SI achieving $\varepsilon$-leaky $(W, S)$-privacy for a special case of $M = 1$.
The main idea of the design is to control the Hamming weight of each query such that it is either $0$ or no less than $M + 1$.
Given any pair $(W, S)$, the conditional probability of generating a query with a fixed Hamming weight at the inference server is at most $\ee^{\varepsilon}$ that at a non-inference server.
}

The distribution of the random pattern $\key_{2} = \big(\perm, \ptrn_{U}, \ptrn^{(0)}_{S}, \ptrn^{(1)}_{S}\big)$ is defined as follows.

First, $\perm$ is a uniformly random bijection from $[N]$ to $\mathcal{I}$.

Second, the Hamming weight $\ell$ of $\ptrn_{U}$ is chosen from the set $[0: K - 2]$.
The random variable $\ell$ is drawn according to the distribution $\Set{P_{\ell}}_{\ell = 0}^{K - 2}$ with
\begin{equation}\label{eq:WS-probability}
    P_{\ell} 
    = \frac{\binom{K - 2}{\ell} \cdot (N - 1)^{\ell}\ee^{-\ell \varepsilon}}{((N - 1)\ee^{-\varepsilon} + 1)^{K - 2}}.
\end{equation}
One can easily verify that $\sum_{\ell = 0}^{K - 2} P_{\ell} = 1$.
Moreover, conditioned on having Hamming weight $\ell$, the vector $\ptrn_{U}$ is uniformly distributed over all vectors of weight $\ell$.

Third, for $j \in [0:1]$, the Hamming weight $s_{j}$ of $\ptrn^{(j)}_{S}$ is either $0$ or $1$.
So that the Hamming weight of the query can be expressed as $\ell + s_{j} + j$.
The distribution of $s_{j}$ is determined by the conditional probability
\ifthenelse{\boolean{shortver}}
{$\Pr[s_{j} \mid \ell, j] = \frac{r^{\ell + s_{j} + j} + (-1)^{\ell + s_{j} + j} \cdot r}{(r + 1) \cdot r^{\ell + j}}$, where $r = (N - 1)\ee^{-\varepsilon}$.}
{
\begin{equation}
    \Pr[s_{j} \mid \ell, j] 
    = \frac{r^{\ell + s_{j} + j} + (-1)^{\ell + s_{j} + j} \cdot r}{(r + 1) \cdot r^{\ell + j}},
\end{equation}
where $r = (N - 1)\ee^{-\varepsilon}$.
}
It is easy to see that the conditional probability is non-negative as $r \geq 1$.
Additionally, one can easily verify that $\Pr[s_{j} = 0 \mid \ell, j] + \Pr[s_{j} = 1 \mid \ell, j] = 1$.
Finally, conditioned on having $s_{j}$ ($j \in [0:1]$), the vector $\ptrn^{(j)}_{S}$ is uniformly distributed over all vectors of weight $s_{j}$.

To illustrate the proposed scheme, we provide an example with the same parameter settings in Table~\ref{res: w-privacy1}.
We also focus on the specific realization of the random pattern $\key_2 = (\pi, f_{U}, f^{(0)}_{S}, f^{(1)}_{S})$ with $\pi = (0, 1, 2)$, where Server $1$ acts as the inference server.
All possible realizations of the random pattern $\key_2$ are detailed in Table~\ref{res: ws-privacy1}.
\ifthenelse{\boolean{shortver}}{}
{
    Further examples are provided in Appendix~\ref{app: example}.
}

\begin{lemma} \label{lem: WS-privacy}
     For $M = 1$, the proposed L-PIR-SI scheme under $\key_{2}$ is $\varepsilon$-leaky $(W, S)$-privacy.
\end{lemma}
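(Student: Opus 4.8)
The plan is to establish the defining inequality~\eqref{cond: WS-privacy} of $\varepsilon$-leaky $(W,S)$-privacy by first decoupling the answer from the query and then deriving an exact expression for the query distribution induced by $\key_{2}$. For the decoupling step, note that $\key_{2}$ is independent of the messages $X_{1:K}$ and that every message sub-packet is uniform over $\mathbb{F}_{\mathsf{q}}$; hence, conditioned on $\que{\dmd,\si}_{n} = q$, the messages remain uniform, so the answer (being the sum of the requested sub-packets by~\eqref{cond: answer vs query}) has a conditional distribution depending only on $q$ and $a$ --- equal to $1$ for the empty answer when $q$ has Hamming weight $0$, and $1/\mathsf{q}$ otherwise --- and in particular not on $(W,S)$. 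Consequently the ratio on the left of~\eqref{cond: WS-privacy} reduces to $\bigprob{\que{\dmd,\si}_{n} = q \mid W, S} \big/ \bigprob{\que{\dmd,\si}_{n} = q \mid W', S'}$, and it suffices to bound this purely combinatorial quantity.

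Next, I would compute this query probability in closed form. Fix a server $n$, a query vector $q$ of Hamming weight $h$, and a pair $(W,S)$ with $|S| = M = 1$; write $s^{*}$ for the unique element of $S$, write $q_{U}$ and $q_{S}$ for the components of $q$ indexed by $U$ and $S$, and set $r = (N-1)\ee^{-\varepsilon}$. Decomposing over the components of $\key_{2}$ yields three factors: $\bigprob{\perm(n) = q[W]} = 1/N$ from the uniform bijection; $\bigprob{\ptrn_{U} = q_{U}}$, which after substituting $P_{\ell}$ from~\eqref{eq:WS-probability} and cancelling the $\binom{K-2}{\ell}(N-1)^{\ell}$ factor from the uniform support of a weight-$\ell$ vector (recall $|U| = K-2$) equals $\ee^{-\ell_{q}\varepsilon}/(r+1)^{K-2}$ with $\ell_{q}$ the Hamming weight of $q_{U}$; and $\bigprob{\ptrn^{(j)}_{S} = q_{S} \mid \ell_{q},\, j}$ with $j = \mathbb{1}[q[W]\neq 0]$, read off from the stated conditional law of $s_{j}$. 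The key observation is that the role of server $n$ for the pair $(W,S)$ (inference versus non-inference) is visible in $q$ alone through the indicator $\mathbb{1}[q[W] = 0]$; carrying out the substitutions --- during which the dependence on $\mathbb{1}[q[s^{*}] = 0]$ cancels in each of the four sign patterns of $(q[W], q[s^{*}])$ --- one arrives at
\begin{equation*}
    \bigprob{\que{\dmd,\si}_{n} = q \mid W, S} =
    \begin{cases}
        \dfrac{r^{h} + (-1)^{h}\, r}{N (N-1)^{h} (r+1)^{K-1}}, & q[W] = 0, \\[2.2ex]
        \dfrac{r^{h-1} + (-1)^{h}}{N (N-1)^{h-1} (r+1)^{K-1}}, & q[W] \neq 0,
    \end{cases}
\end{equation*}
so this probability depends on $(W,S)$ only through whether $q[W] = 0$.

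Finally, since $r^{h} + (-1)^{h} r = r\,(r^{h-1} + (-1)^{h})$ and $r \geq 1$, the two branches above differ by exactly the factor $r/(N-1) = \ee^{-\varepsilon}$ and vanish simultaneously (precisely when $h$ is odd and $r^{h-1} = 1$), so for every $(W,S)$ the support of $q \mapsto \bigprob{\que{\dmd,\si}_{n} = q \mid W, S}$ coincides and no division by zero occurs. Hence for all $(W,S)$ and $(W',S')$ the reduced ratio lies in $\{\ee^{-\varepsilon}, 1, \ee^{\varepsilon}\}$, reaching $\ee^{\varepsilon}$ only in the extreme case $q[W] \neq 0$ together with $q[W'] = 0$, and is therefore at most $\ee^{\varepsilon}$; combined with the decoupling step this yields~\eqref{cond: WS-privacy}. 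I expect the bookkeeping in the second step to be the main obstacle: one has to track carefully how the weight $h$ is split among the demand, unknown, and side-information coordinates in each sign pattern and check that, after inserting $P_{\ell}$ and the law of $s_{j}$, the product collapses to a function of $h$ and $\mathbb{1}[q[W]=0]$ alone; the degenerate cases where the probabilities are $0$ then need the short separate argument above to preclude an unbounded ratio.
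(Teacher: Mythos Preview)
Your proposal is correct and follows essentially the same route as the paper's proof: both reduce~\eqref{cond: WS-privacy} to the query-only ratio, compute $\bigprob{\que{\dmd,\si}_{n}=q\mid W,S}$ by decomposing over the components of $\key_{2}$ and the possible weight splits among $(W,U,S)$, observe that the result depends on $(W,S)$ only through $\mathbb{1}[q[W]=0]$, and conclude that the ratio lies in $\{\ee^{-\varepsilon},1,\ee^{\varepsilon}\}$. Your presentation is slightly more streamlined---you package all Hamming weights $h$ into a single closed form rather than treating $h=0$, $h\in[K-1]$, and $h=K$ separately---and you explicitly address the zero-probability degeneracy (e.g.\ $h=1$, or $h$ odd with $r=1$), a point the paper leaves implicit.
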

\begin{proof}
    \ifthenelse{\boolean{shortver}}
    {See the full version~\cite{HuangB26} for the complete proof.
    }{
    See Appendix~\ref{app: WS-privacy}.}
\end{proof}

We have $P_{0}^{-1} = ((N - 1)\ee^{-\varepsilon} + 1)^{K - 2}$ for $\ell = 0$ in Eq.~\eqref{eq:WS-probability}.
Substituting $P_{0}$ into Eq.~\eqref{eq: download cost}, the download cost of this proposed scheme is
\begin{equation}
    \downcost_{\key_{2}} = 1 + \frac{1}{N - 1} \cdot \Big( 1 - \frac{1}{((N - 1)\ee^{-\varepsilon} + 1)^{K - 2}} \Big).
\end{equation}
Then, combining with the correctness analysis in subsection~\ref{section: correctness and download cost}, we have established Theorem~\ref{thm: WS-privacy}.

\section{Conclusion}

This paper proposed a unified probabilistic framework for $\varepsilon$-leaky PIR-SI under $W$-privacy and $(W, S)$-privacy. 
Our download cost bounds not only recover the best-known limits in the perfect privacy and the side-information-free case, but also capture the scaling behavior of the leakage ratio exponent. 
In the absence of a general converse for PIR-SI, these bounds provide meaningful performance guarantees for leaky retrieval.  
For $(W, S)$-privacy, our framework yields an elegant bound for $M = 1$, while deriving the matching download costs for general $M \ge 2$ remains an interesting open problem.

\balance

\bibliographystyle{IEEEtran}
\bibliography{PIR.bib}

\ifthenelse{\boolean{shortver}}
{}
{\newpage
\appendices

\section{The L-PIR-SI Schemes with $N=3$, $K=4$ and $M=1$}\label{app: example}

\begin{table*}[!tbp]
    \caption{Realizations of $\key_{1} = \big(\pi, f_{U}, f^{(0)}_{S}, f^{(1)}_{S}\big)$ with $\pi = (0, 1, 2)$ and $(N, K, M) = (3, 4, 1)$ for $\varepsilon$-leaky $W$-privacy. \\
            Here, $P_{0} = \frac{1}{2\ee^{-\varepsilon} + 1}$ and $P_{1} = \frac{2\ee^{-\varepsilon}}{2\ee^{-\varepsilon} + 1}$.}
    \label{res2: w-privacy}
    \centering    
    \setlength{\tabcolsep}{3pt}
    \begin{tabular}{l@{\hspace{3pt}}l@{\hspace{1pt}}l c@{\hspace{2pt}}c c@{\hspace{2pt}}c c@{\hspace{2pt}}c c}
        \toprule
        \multirow{2}{*}{\smallmath $f^{\phantom{()}}_{U}$} & 
        \multirow{2}{*}{\smallmath $f^{(0)}_{S}$} & 
        \multirow{2}{*}{\smallmath $f^{(1)}_{S}$} & \multicolumn{2}{c}{Server $1$}                                                 & \multicolumn{2}{c}{Server $2$}                                                                                                & \multicolumn{2}{c}{Server $3$}                                                                                            & \multirow{2}{*}{Prob.}                \\
        \cline{4-9}
                 &           &            & \multicolumn{1}{c}{$q_{1}$}           & $a_{1}$                                        & \multicolumn{1}{c}{$q_{2}$}            & $a_{2}$                                                                              & \multicolumn{1}{c}{$q_{3}$}            & $a_{3}$                                                                          &                                       \\
        \midrule
        \RED{00} & ~\BLUE{0} & ~\GREEN{1} & \multicolumn{1}{c}{0\BLUE{0}\RED{00}} & $\varnothing$                                  & \multicolumn{1}{c}{1\GREEN{1}\RED{00}} & \smallmath $X_{1}[1]\!+\!X_{2}[\GREEN{1}]\phantom{{}\!+\!X_{3}[0]\!+\!X_{4}[0]}$     & \multicolumn{1}{c}{2\GREEN{1}\RED{00}} & \smallmath $X_{1}[2]\!+\!X_{2}[\GREEN{1}]\phantom{{}\!+\!X_{3}[0]\!+\!X_{4}[0]}$ & \multirow{2}{*}{$\frac{1}{24} P_{0}$} \\ 
        \RED{00} & ~\BLUE{0} & ~\GREEN{2} & \multicolumn{1}{c}{0\BLUE{0}\RED{00}} & $\varnothing$                                  & \multicolumn{1}{c}{1\GREEN{2}\RED{00}} & \smallmath $X_{1}[1]\!+\!X_{2}[\GREEN{2}]\phantom{{}\!+\!X_{3}[0]\!+\!X_{4}[0]}$     & \multicolumn{1}{c}{2\GREEN{2}\RED{00}} & \smallmath $X_{1}[2]\!+\!X_{2}[\GREEN{2}]\phantom{{}\!+\!X_{3}[0]\!+\!X_{4}[0]}$ &                                       \\ 
        \hline
        \RED{11} & ~\BLUE{0} & ~\GREEN{1} & \multicolumn{1}{c}{0\BLUE{1}\RED{11}} & \smallmath $X_{3}[\RED{1}]\!+\!X_{4}[\RED{1}]$ & \multicolumn{1}{c}{1\GREEN{1}\RED{11}} & \smallmath $X_{1}[1]\!+\!X_{2}[\GREEN{1}]\!+\!X_{3}[\RED{1}]\!+\!X_{4}[\RED{1}]$     & \multicolumn{1}{c}{2\GREEN{1}\RED{11}} & \smallmath $X_{1}[2]\!+\!X_{2}[\GREEN{1}]\!+\!X_{3}[\RED{1}]\!+\!X_{4}[\RED{1}]$ & \multirow{8}{*}{$\frac{1}{24} P_{1}$} \\
        \RED{11} & ~\BLUE{0} & ~\GREEN{2} & \multicolumn{1}{c}{0\BLUE{2}\RED{11}} & \smallmath $X_{3}[\RED{1}]\!+\!X_{4}[\RED{1}]$ & \multicolumn{1}{c}{1\GREEN{2}\RED{11}} & \smallmath $X_{1}[1]\!+\!X_{2}[\GREEN{2}]\!+\!X_{3}[\RED{1}]\!+\!X_{4}[\RED{1}]$     & \multicolumn{1}{c}{2\GREEN{2}\RED{11}} & \smallmath $X_{1}[2]\!+\!X_{2}[\GREEN{2}]\!+\!X_{3}[\RED{1}]\!+\!X_{4}[\RED{1}]$ &                                       \\
        \RED{12} & ~\BLUE{0} & ~\GREEN{1} & \multicolumn{1}{c}{0\BLUE{1}\RED{12}} & \smallmath $X_{3}[\RED{1}]\!+\!X_{4}[\RED{2}]$ & \multicolumn{1}{c}{1\GREEN{1}\RED{12}} & \smallmath $X_{1}[1]\!+\!X_{2}[\GREEN{1}]\!+\!X_{3}[\RED{1}]\!+\!X_{4}[\RED{2}]$     & \multicolumn{1}{c}{2\GREEN{1}\RED{12}} & \smallmath $X_{1}[2]\!+\!X_{2}[\GREEN{1}]\!+\!X_{3}[\RED{1}]\!+\!X_{4}[\RED{2}]$ &                                       \\
        \RED{12} & ~\BLUE{0} & ~\GREEN{2} & \multicolumn{1}{c}{0\BLUE{2}\RED{12}} & \smallmath $X_{3}[\RED{1}]\!+\!X_{4}[\RED{2}]$ & \multicolumn{1}{c}{1\GREEN{2}\RED{12}} & \smallmath $X_{1}[1]\!+\!X_{2}[\GREEN{2}]\!+\!X_{3}[\RED{1}]\!+\!X_{4}[\RED{2}]$     & \multicolumn{1}{c}{2\GREEN{1}\RED{12}} & \smallmath $X_{1}[2]\!+\!X_{2}[\GREEN{2}]\!+\!X_{3}[\RED{1}]\!+\!X_{4}[\RED{2}]$ &                                       \\
        \RED{21} & ~\BLUE{0} & ~\GREEN{1} & \multicolumn{1}{c}{0\BLUE{1}\RED{21}} & \smallmath $X_{3}[\RED{2}]\!+\!X_{4}[\RED{1}]$ & \multicolumn{1}{c}{1\GREEN{1}\RED{21}} & \smallmath $X_{1}[1]\!+\!X_{2}[\GREEN{1}]\!+\!X_{3}[\RED{2}]\!+\!X_{4}[\RED{1}]$     & \multicolumn{1}{c}{2\GREEN{1}\RED{21}} & \smallmath $X_{1}[2]\!+\!X_{2}[\GREEN{1}]\!+\!X_{3}[\RED{2}]\!+\!X_{4}[\RED{1}]$ &                                       \\
        \RED{21} & ~\BLUE{0} & ~\GREEN{2} & \multicolumn{1}{c}{0\BLUE{2}\RED{21}} & \smallmath $X_{3}[\RED{2}]\!+\!X_{4}[\RED{1}]$ & \multicolumn{1}{c}{1\GREEN{2}\RED{21}} & \smallmath $X_{1}[1]\!+\!X_{2}[\GREEN{2}]\!+\!X_{3}[\RED{2}]\!+\!X_{4}[\RED{1}]$     & \multicolumn{1}{c}{2\GREEN{2}\RED{21}} & \smallmath $X_{1}[2]\!+\!X_{2}[\GREEN{2}]\!+\!X_{3}[\RED{2}]\!+\!X_{4}[\RED{1}]$ &                                       \\
        \RED{22} & ~\BLUE{0} & ~\GREEN{1} & \multicolumn{1}{c}{0\BLUE{1}\RED{22}} & \smallmath $X_{3}[\RED{2}]\!+\!X_{4}[\RED{2}]$ & \multicolumn{1}{c}{1\GREEN{1}\RED{22}} & \smallmath $X_{1}[1]\!+\!X_{2}[\GREEN{1}]\!+\!X_{3}[\RED{2}]\!+\!X_{4}[\RED{2}]$     & \multicolumn{1}{c}{2\GREEN{1}\RED{22}} & \smallmath $X_{1}[2]\!+\!X_{2}[\GREEN{1}]\!+\!X_{3}[\RED{2}]\!+\!X_{4}[\RED{2}]$ &                                       \\
        \RED{22} & ~\BLUE{0} & ~\GREEN{2} & \multicolumn{1}{c}{0\BLUE{2}\RED{22}} & \smallmath $X_{3}[\RED{2}]\!+\!X_{4}[\RED{2}]$ & \multicolumn{1}{c}{1\GREEN{2}\RED{22}} & \smallmath $X_{1}[1]\!+\!X_{2}[\GREEN{2}]\!+\!X_{3}[\RED{2}]\!+\!X_{4}[\RED{2}]$     & \multicolumn{1}{c}{2\GREEN{1}\RED{22}} & \smallmath $X_{1}[2]\!+\!X_{2}[\GREEN{2}]\!+\!X_{3}[\RED{2}]\!+\!X_{4}[\RED{2}]$ &                                       \\
        \bottomrule
    \end{tabular}
\end{table*}

\begin{table*}[!tbp]
    \caption{Realizations of $\key_{2} = \big(\pi, f_{U}, f^{(0)}_{S}, f^{(1)}_{S}\big)$ with $\pi = (0, 1, 2)$ and $(N, K, M) = (3, 4, 1)$ for $\varepsilon$-leaky $(W,S)$-privacy. \\
            Here, $P_{0} = \frac{1}{(2e^{-\varepsilon} + 1)^2}$, $P_{1} = \frac{4e^{-\varepsilon}}{(2e^{-\varepsilon} + 1)^2}$, and $P_{2} = \frac{4e^{-2\varepsilon}}{(2e^{-\varepsilon} + 1)^2}$.\\}
    \label{res2: ws-privacy}
    \centering
    \setlength{\tabcolsep}{3pt}
    \begin{tabular}{l@{\hspace{3pt}}l@{\hspace{1pt}}l c@{\hspace{2pt}}c c@{\hspace{2pt}}c c@{\hspace{2pt}}c c}
        \toprule
        \multirow{2}{*}{\smallmath $f^{\phantom{()}}_{U}$} & 
        \multirow{2}{*}{\smallmath $f^{(0)}_{S}$} & 
        \multirow{2}{*}{\smallmath $f^{(1)}_{S}$} & \multicolumn{2}{c}{Server $1$}                                                                           & \multicolumn{2}{c}{Server $2$}                                                                                                      & \multicolumn{2}{c}{Server $3$}                                                                                                      & \multirow{2}{*}{Prob.}                \\
        \cline{4-9}
                 &            &           & \multicolumn{1}{c}{$q_{1}$}           & $a_{1}$                                                                  & \multicolumn{1}{c}{$q_{2}$}            & $a_{2}$                                                                                    & \multicolumn{1}{c}{$q_{3}$}            & $a_{3}$                                                                                    &                                       \\
        \midrule
        \RED{00} & ~\BLUE{0} & ~\GREEN{1} & \multicolumn{1}{c}{0\BLUE{0}\RED{00}} & $\varnothing$                                                            & \multicolumn{1}{c}{1\GREEN{1}\RED{00}} & \smallmath $X_{1}[1]\!+\!X_{2}[\GREEN{1}]\phantom{{}\!+\!X_{3}[0]}$                        & \multicolumn{1}{c}{2\GREEN{1}\RED{00}} & \smallmath $X_{1}[2]\!+\!X_{2}[\GREEN{1}]\phantom{{}\!+\!X_{3}[0]}$                        & \multirow{2}{*}{$\frac{1}{24} P_{0}$} \\
        \RED{00} & ~\BLUE{0} & ~\GREEN{2} & \multicolumn{1}{c}{0\BLUE{0}\RED{00}} & $\varnothing$                                                            & \multicolumn{1}{c}{1\GREEN{2}\RED{00}} & \smallmath $X_{1}[1]\!+\!X_{2}[\GREEN{2}]\phantom{{}\!+\!X_{3}[0]}$                        & \multicolumn{1}{c}{2\GREEN{2}\RED{00}} & \smallmath $X_{1}[2]\!+\!X_{2}[\GREEN{2}]\phantom{{}\!+\!X_{3}[0]}$                        &                                       \\
        \hline
        \RED{01} & ~\BLUE{1} & ~\GREEN{0} & \multicolumn{1}{c}{0\BLUE{1}\RED{01}} & \smallmath $X_{2}[\BLUE{1}]\phantom{{}\!+\!X_{3}[0]}\!+\!X_{4}[\RED{1}]$ & \multicolumn{1}{c}{1\GREEN{0}\RED{01}} & \smallmath $X_{1}[1]\phantom{{}\!+\!X_{2}[0]\!+\!X_{3}[0]}\!+\!X_{4}[\RED{1}]$             & \multicolumn{1}{c}{2\GREEN{0}\RED{01}} & \smallmath $X_{1}[2]\phantom{{}\!+\!X_{2}[0]\!+\!X_{3}[0]}\!+\!X_{4}[\RED{1}]$             & \multirow{15}{*}{$\frac{1}{24} P_{1}$}\\
        \RED{01} & ~\BLUE{1} & ~\GREEN{1} & \multicolumn{1}{c}{0\BLUE{1}\RED{01}} & \smallmath $X_{2}[\BLUE{1}]\phantom{{}\!+\!X_{3}[0]}\!+\!X_{4}[\RED{1}]$ & \multicolumn{1}{c}{1\GREEN{1}\RED{01}} & \smallmath $X_{1}[1]\!+\!X_{2}[\GREEN{1}]\phantom{{}\!+\!X_{3}[0]}\!+\!X_{4}[\RED{1}]$     & \multicolumn{1}{c}{2\GREEN{1}\RED{01}} & \smallmath $X_{1}[2]\!+\!X_{2}[\GREEN{1}]\phantom{{}\!+\!X_{3}[0]}\!+\!X_{4}[\RED{1}]$     &                                       \\
        \RED{01} & ~\BLUE{1} & ~\GREEN{2} & \multicolumn{1}{c}{0\BLUE{1}\RED{01}} & \smallmath $X_{2}[\BLUE{1}]\phantom{{}\!+\!X_{3}[0]}\!+\!X_{4}[\RED{1}]$ & \multicolumn{1}{c}{1\GREEN{2}\RED{01}} & \smallmath $X_{1}[1]\!+\!X_{2}[\GREEN{2}]\phantom{{}\!+\!X_{3}[0]}\!+\!X_{4}[\RED{1}]$     & \multicolumn{1}{c}{2\GREEN{2}\RED{01}} & \smallmath $X_{1}[2]\!+\!X_{2}[\GREEN{2}]\phantom{{}\!+\!X_{3}[0]}\!+\!X_{4}[\RED{1}]$     &                                       \\
        \RED{01} & ~\BLUE{2} & ~\GREEN{0} & \multicolumn{1}{c}{0\BLUE{2}\RED{01}} & \smallmath $X_{2}[\BLUE{2}]\phantom{{}\!+\!X_{3}[0]}\!+\!X_{4}[\RED{1}]$ & \multicolumn{1}{c}{1\GREEN{0}\RED{01}} & \smallmath $X_{1}[1]\phantom{{}\!+\!X_{2}[0]\!+\!X_{3}[0]}\!+\!X_{4}[\RED{1}]$             & \multicolumn{1}{c}{2\GREEN{0}\RED{01}} & \smallmath $X_{1}[2]\phantom{{}\!+\!X_{2}[0]\!+\!X_{3}[0]}\!+\!X_{4}[\RED{1}]$             &                                       \\
        \RED{01} & ~\BLUE{2} & ~\GREEN{1} & \multicolumn{1}{c}{0\BLUE{2}\RED{01}} & \smallmath $X_{2}[\BLUE{2}]\phantom{{}\!+\!X_{3}[0]}\!+\!X_{4}[\RED{1}]$ & \multicolumn{1}{c}{1\GREEN{1}\RED{01}} & \smallmath $X_{1}[1]\!+\!X_{2}[\GREEN{1}]\phantom{{}\!+\!X_{3}[0]}\!+\!X_{4}[\RED{1}]$     & \multicolumn{1}{c}{2\GREEN{1}\RED{01}} & \smallmath $X_{1}[1]\!+\!X_{2}[\GREEN{1}]\phantom{{}\!+\!X_{3}[0]}\!+\!X_{4}[\RED{1}]$     &                                       \\
        \RED{01} & ~\BLUE{2} & ~\GREEN{2} & \multicolumn{1}{c}{0\BLUE{2}\RED{02}} & \smallmath $X_{2}[\BLUE{2}]\phantom{{}\!+\!X_{3}[0]}\!+\!X_{4}[\RED{1}]$ & \multicolumn{1}{c}{1\GREEN{2}\RED{02}} & \smallmath $X_{1}[1]\!+\!X_{2}[\GREEN{2}]\phantom{{}\!+\!X_{3}[0]}\!+\!X_{4}[\RED{1}]$     & \multicolumn{1}{c}{2\GREEN{0}\RED{02}} & \smallmath $X_{1}[2]\!+\!X_{2}[\GREEN{2}]\phantom{{}\!+\!X_{3}[0]}\!+\!X_{4}[\RED{1}]$     &                                       \\
        \RED{02} & ~\BLUE{1} & ~\GREEN{0} & \multicolumn{1}{c}{0\BLUE{1}\RED{02}} & \smallmath $X_{2}[\BLUE{1}]\phantom{{}\!+\!X_{3}[0]}\!+\!X_{4}[\RED{2}]$ & \multicolumn{1}{c}{1\GREEN{0}\RED{02}} & \smallmath $X_{1}[1]\phantom{{}\!+\!X_{2}[0]\!+\!X_{3}[0]}\!+\!X_{4}[\RED{2}]$             & \multicolumn{1}{c}{2\GREEN{0}\RED{02}} & \smallmath $X_{1}[2]\phantom{{}\!+\!X_{2}[0]\!+\!X_{3}[0]}\!+\!X_{4}[\RED{2}]$             &                                       \\
        \RED{02} & ~\BLUE{1} & ~\GREEN{1} & \multicolumn{1}{c}{0\BLUE{1}\RED{02}} & \smallmath $X_{2}[\BLUE{1}]\phantom{{}\!+\!X_{3}[0]}\!+\!X_{4}[\RED{2}]$ & \multicolumn{1}{c}{1\GREEN{1}\RED{02}} & \smallmath $X_{1}[1]\!+\!X_{2}[\GREEN{1}]\phantom{{}\!+\!X_{3}[0]}\!+\!X_{4}[\RED{2}]$     & \multicolumn{1}{c}{2\GREEN{1}\RED{02}} & \smallmath $X_{1}[1]\!+\!X_{2}[\GREEN{1}]\phantom{{}\!+\!X_{3}[0]}\!+\!X_{4}[\RED{2}]$     &                                       \\
        \RED{02} & ~\BLUE{1} & ~\GREEN{2} & \multicolumn{1}{c}{0\BLUE{1}\RED{02}} & \smallmath $X_{2}[\BLUE{1}]\phantom{{}\!+\!X_{3}[0]}\!+\!X_{4}[\RED{2}]$ & \multicolumn{1}{c}{1\GREEN{2}\RED{02}} & \smallmath $X_{1}[1]\!+\!X_{2}[\GREEN{2}]\phantom{{}\!+\!X_{3}[0]}\!+\!X_{4}[\RED{2}]$     & \multicolumn{1}{c}{2\GREEN{2}\RED{02}} & \smallmath $X_{1}[2]\!+\!X_{2}[\GREEN{2}]\phantom{{}\!+\!X_{3}[0]}\!+\!X_{4}[\RED{2}]$     &                                       \\            
        \RED{02} & ~\BLUE{2} & ~\GREEN{0} & \multicolumn{1}{c}{0\BLUE{1}\RED{02}} & \smallmath $X_{2}[\BLUE{2}]\phantom{{}\!+\!X_{3}[0]}\!+\!X_{4}[\RED{2}]$ & \multicolumn{1}{c}{1\GREEN{0}\RED{02}} & \smallmath $X_{1}[1]\phantom{{}\!+\!X_{2}[0]\!+\!X_{3}[0]}\!+\!X_{4}[\RED{2}]$             & \multicolumn{1}{c}{2\GREEN{0}\RED{02}} & \smallmath $X_{1}[2]\phantom{{}\!+\!X_{2}[0]\!+\!X_{3}[0]}\!+\!X_{4}[\RED{2}]$             &                                       \\
        \RED{02} & ~\BLUE{2} & ~\GREEN{1} & \multicolumn{1}{c}{0\BLUE{2}\RED{02}} & \smallmath $X_{2}[\BLUE{2}]\phantom{{}\!+\!X_{3}[0]}\!+\!X_{4}[\RED{2}]$ & \multicolumn{1}{c}{1\GREEN{1}\RED{02}} & \smallmath $X_{1}[1]\!+\!X_{2}[\GREEN{1}]\phantom{{}\!+\!X_{3}[0]}\!+\!X_{4}[\RED{2}]$     & \multicolumn{1}{c}{2\GREEN{1}\RED{02}} & \smallmath $X_{1}[2]\!+\!X_{2}[\GREEN{1}]\phantom{{}\!+\!X_{3}[0]}\!+\!X_{4}[\RED{2}]$     &                                       \\                            
        \RED{02} & ~\BLUE{2} & ~\GREEN{2} & \multicolumn{1}{c}{0\BLUE{2}\RED{02}} & \smallmath $X_{2}[\BLUE{2}]\phantom{{}\!+\!X_{3}[0]}\!+\!X_{4}[\RED{2}]$ & \multicolumn{1}{c}{1\GREEN{2}\RED{02}} & \smallmath $X_{1}[1]\!+\!X_{2}[\GREEN{2}]\phantom{{}\!+\!X_{3}[0]}\!+\!X_{4}[\RED{2}]$     & \multicolumn{1}{c}{2\GREEN{2}\RED{02}} & \smallmath $X_{1}[2]\!+\!X_{2}[\GREEN{2}]\phantom{{}\!+\!X_{3}[0]}\!+\!X_{4}[\RED{2}]$     &                                       \\
        \RED{10} & ~\BLUE{1} & ~\GREEN{0} & \multicolumn{1}{c}{0\BLUE{1}\RED{10}} & \smallmath $X_{2}[\BLUE{1}]\!+\!X_{3}[\RED{1}]\phantom{{}\!+\!X_{4}[0]}$ & \multicolumn{1}{c}{1\GREEN{0}\RED{10}} & \smallmath $X_{1}[1]\phantom{{}\!+\!X_{2}[0]}\!+\!X_{3}[\RED{1}]\phantom{{}\!+\!X_{4}[0]}$ & \multicolumn{1}{c}{2\GREEN{0}\RED{10}} & \smallmath $X_{1}[2]\phantom{{}\!+\!X_{2}[0]}\!+\!X_{3}[\RED{1}]\phantom{{}\!+\!X_{4}[0]}$ &                                       \\
                 & ~$\vdots$  &           & \multicolumn{1}{c}{$\vdots$}          & $\vdots$                                                                 & \multicolumn{1}{c}{$\vdots$}           & $\vdots$                                                                                   & \multicolumn{1}{c}{$\vdots$}           & $\vdots$                                                                                   &                                       \\
        \RED{20} & ~\BLUE{2} & ~\GREEN{2} & \multicolumn{1}{c}{0\BLUE{2}\RED{20}} & \smallmath $X_{2}[\BLUE{2}]\!+\!X_{3}[\RED{2}]\phantom{{}\!+\!X_{4}[0]}$ & \multicolumn{1}{c}{1\GREEN{2}\RED{20}} & \smallmath $X_{1}[1]\!+\!X_{2}[\GREEN{2}]\!+\!X_{3}[\RED{2}]\phantom{{}\!+\!X_{4}[0]}$     & \multicolumn{1}{c}{2\GREEN{2}\RED{20}} & \smallmath $X_{1}[2]\!+\!X_{2}[\GREEN{2}]\!+\!X_{3}[\RED{2}]\phantom{{}\!+\!X_{4}[0]}$     &                                       \\
        \hline
        \RED{11} & ~\BLUE{0} & ~\GREEN{0} & \multicolumn{1}{c}{0\BLUE{0}\RED{11}} & \smallmath $\phantom{X_{2}[0]\!+\!{}}X_{3}[\RED{1}]\!+\!X_{4}[\RED{1}]$  & \multicolumn{1}{c}{1\GREEN{0}\RED{11}} & \smallmath $X_{1}[1]\phantom{{}\!+\!X_{2}[0]}\!+\!X_{3}[\RED{1}]\!+\!X_{4}[\RED{1}]$       & \multicolumn{1}{c}{2\GREEN{0}\RED{11}} & \smallmath $X_{1}[2]\phantom{{}\!+\!X_{2}[0]}\!+\!X_{3}[\RED{1}]\!+\!X_{4}[\RED{1}]$       & \multirow{12}{*}{$\frac{1}{24} P_{2}$}\\  
        \RED{11} & ~\BLUE{0} & ~\GREEN{1} & \multicolumn{1}{c}{0\BLUE{0}\RED{11}} & \smallmath $\phantom{X_{2}[0]\!+\!{}}X_{3}[\RED{1}]\!+\!X_{4}[\RED{1}]$  & \multicolumn{1}{c}{1\GREEN{1}\RED{11}} & \smallmath $X_{1}[1]\!+\!X_{2}[\GREEN{1}]\!+\!X_{3}[\RED{1}]\!+\!X_{4}[\RED{1}]$           & \multicolumn{1}{c}{2\GREEN{1}\RED{11}} & \smallmath $X_{1}[2]\!+\!X_{2}[\GREEN{1}]\!+\!X_{3}[\RED{1}]\!+\!X_{4}[\RED{1}]$           &                                       \\  
        \RED{11} & ~\BLUE{0} & ~\GREEN{2} & \multicolumn{1}{c}{0\BLUE{0}\RED{11}} & \smallmath $\phantom{X_{2}[0]\!+\!{}}X_{3}[\RED{1}]\!+\!X_{4}[\RED{1}]$  & \multicolumn{1}{c}{1\GREEN{2}\RED{11}} & \smallmath $X_{1}[1]\!+\!X_{2}[\GREEN{2}]\!+\!X_{3}[\RED{1}]\!+\!X_{4}[\RED{1}]$           & \multicolumn{1}{c}{2\GREEN{2}\RED{11}} & \smallmath $X_{1}[2]\!+\!X_{2}[\GREEN{2}]\!+\!X_{3}[\RED{1}]\!+\!X_{4}[\RED{1}]$           &                                       \\
        \RED{11} & ~\BLUE{1} & ~\GREEN{0} & \multicolumn{1}{c}{0\BLUE{1}\RED{11}} & \smallmath $X_{2}[\BLUE{1}]\!+\!X_{3}[\RED{1}]\!+\!X_{4}[\RED{1}]$       & \multicolumn{1}{c}{1\GREEN{0}\RED{11}} & \smallmath $X_{1}[1]\phantom{{}\!+\!X_{2}[0]}\!+\!X_{3}[\RED{1}]\!+\!X_{4}[\RED{1}]$       & \multicolumn{1}{c}{2\GREEN{0}\RED{11}} & \smallmath $X_{1}[2]\phantom{{}\!+\!X_{2}[0]}\!+\!X_{3}[\RED{1}]\!+\!X_{4}[\RED{1}]$       &                                       \\
        \RED{11} & ~\BLUE{1} & ~\GREEN{1} & \multicolumn{1}{c}{0\BLUE{1}\RED{11}} & \smallmath $X_{2}[\BLUE{1}]\!+\!X_{3}[\RED{1}]\!+\!X_{4}[\RED{1}]$       & \multicolumn{1}{c}{1\GREEN{1}\RED{11}} & \smallmath $X_{1}[1]\!+\!X_{2}[\GREEN{1}]\!+\!X_{3}[\RED{1}]\!+\!X_{4}[\RED{1}]$           & \multicolumn{1}{c}{2\GREEN{1}\RED{11}} & \smallmath $X_{1}[2]\!+\!X_{2}[\GREEN{1}]\!+\!X_{3}[\RED{1}]\!+\!X_{4}[\RED{1}]$           &                                       \\
        \RED{11} & ~\BLUE{1} & ~\GREEN{2} & \multicolumn{1}{c}{0\BLUE{1}\RED{11}} & \smallmath $X_{2}[\BLUE{1}]\!+\!X_{3}[\RED{1}]\!+\!X_{4}[\RED{1}]$       & \multicolumn{1}{c}{1\GREEN{2}\RED{11}} & \smallmath $X_{1}[1]\!+\!X_{2}[\GREEN{2}]\!+\!X_{3}[\RED{1}]\!+\!X_{4}[\RED{1}]$           & \multicolumn{1}{c}{2\GREEN{2}\RED{11}} & \smallmath $X_{1}[2]\!+\!X_{2}[\GREEN{2}]\!+\!X_{3}[\RED{1}]\!+\!X_{4}[\RED{1}]$           &                                       \\
        \RED{11} & ~\BLUE{2} & ~\GREEN{0} & \multicolumn{1}{c}{0\BLUE{2}\RED{11}} & \smallmath $X_{2}[\BLUE{2}]\!+\!X_{3}[\RED{1}]\!+\!X_{4}[\RED{1}]$       & \multicolumn{1}{c}{1\GREEN{0}\RED{11}} & \smallmath $X_{1}[1]\phantom{{}\!+\!X_{2}[0]}\!+\!X_{3}[\RED{1}]\!+\!X_{4}[\RED{1}]$       & \multicolumn{1}{c}{2\GREEN{0}\RED{11}} & \smallmath $X_{1}[2]\phantom{{}\!+\!X_{2}[0]}\!+\!X_{3}[\RED{1}]\!+\!X_{4}[\RED{1}]$       &                                       \\
        \RED{11} & ~\BLUE{2} & ~\GREEN{1} & \multicolumn{1}{c}{0\BLUE{2}\RED{11}} & \smallmath $X_{2}[\BLUE{2}]\!+\!X_{3}[\RED{1}]\!+\!X_{4}[\RED{1}]$       & \multicolumn{1}{c}{1\GREEN{1}\RED{11}} & \smallmath $X_{1}[1]\!+\!X_{2}[\GREEN{1}]\!+\!X_{3}[\RED{1}]\!+\!X_{4}[\RED{1}]$           & \multicolumn{1}{c}{2\GREEN{1}\RED{11}} & \smallmath $X_{1}[2]\!+\!X_{2}[\GREEN{1}]\!+\!X_{3}[\RED{1}]\!+\!X_{4}[\RED{1}]$           &                                       \\
        \RED{11} & ~\BLUE{2} & ~\GREEN{2} & \multicolumn{1}{c}{0\BLUE{2}\RED{11}} & \smallmath $X_{2}[\BLUE{2}]\!+\!X_{3}[\RED{1}]\!+\!X_{4}[\RED{1}]$       & \multicolumn{1}{c}{1\GREEN{2}\RED{11}} & \smallmath $X_{1}[1]\!+\!X_{2}[\GREEN{2}]\!+\!X_{3}[\RED{1}]\!+\!X_{4}[\RED{1}]$           & \multicolumn{1}{c}{2\GREEN{2}\RED{11}} & \smallmath $X_{1}[2]\!+\!X_{2}[\GREEN{2}]\!+\!X_{3}[\RED{1}]\!+\!X_{4}[\RED{1}]$           &                                       \\
        \RED{12} & ~\BLUE{0} & ~\GREEN{0} & \multicolumn{1}{c}{0\BLUE{1}\RED{12}} & \smallmath $\phantom{X_{2}[0]\!+\!{}}X_{3}[\RED{1}]\!+\!X_{4}[\RED{2}]$  & \multicolumn{1}{c}{1\GREEN{0}\RED{12}} & \smallmath $X_{1}[1]\phantom{{}\!+\!X_{2}[0]}\!+\!X_{3}[\RED{1}]\!+\!X_{4}[\RED{2}]$       & \multicolumn{1}{c}{2\GREEN{0}\RED{12}} & \smallmath $X_{1}[2]\phantom{{}\!+\!X_{2}[0]}\!+\!X_{3}[\RED{1}]\!+\!X_{4}[\RED{2}]$       &                                       \\
                 & ~$\vdots$ &            & \multicolumn{1}{c}{$\vdots$}          & $\vdots$                                                                 & \multicolumn{1}{c}{$\vdots$}           & $\vdots$                                                                                   & \multicolumn{1}{c}{$\vdots$}           & $\vdots$                                                                                   &                                       \\
        \RED{22} & ~\BLUE{2} & ~\GREEN{2} & \multicolumn{1}{c}{0\BLUE{2}\RED{22}} & \smallmath $X_{2}[\BLUE{2}]\!+\!X_{3}[\RED{2}]\!+\!X_{4}[\RED{2}]$       & \multicolumn{1}{c}{1\GREEN{2}\RED{22}} & \smallmath $X_{1}[1]\!+\!X_{2}[\GREEN{2}]\!+\!X_{3}[\RED{2}]\!+\!X_{4}[\RED{2}]$           & \multicolumn{1}{c}{2\GREEN{2}\RED{22}} & \smallmath $X_{1}[2]\!+\!X_{2}[\GREEN{2}]\!+\!X_{3}[\RED{2}]\!+\!X_{4}[\RED{2}]$           &                                       \\
        \bottomrule
        \end{tabular}
\end{table*}

We provide another example with $N =  3$, $K = 4$, $M = 1$, and $L = 2$ to further illustrate the proposed schemes for either $\varepsilon$-leaky $W$-privacy or $(W, S)$-privacy.
In this case, we have $X_{i} = [X_{i}[1]]$ for $i \in [4]$.
Assume that the user wants to retrieve message $X_{1}$ with side information $X_{2}$~(i.e., $W = 1$, $S = \Set{2}$, $U = \Set{3, 4}$, and $g = 2$).
Note that $\perm$ takes $4! = 24$ possible permutations with equal probability in this example.

For the $\varepsilon$-leaky $W$-privacy scheme, we only present the realizations of the random pattern $\key_{1} = \big(\pi, f_{U}, f^{(0)}_{S}, f^{(1)}_{S} \big)$ with $\pi = (0, 1, 2)$.
The possible realizations of $\key_{1}$ are listed in Table~\ref{res2: w-privacy}.
Notice that $K$ is divisible by $M + 1$ in this example.
Thus, $f_{U}$ always has a Hamming weight of $K - M - 1 = 2$ if $k = \lceil g \rceil - 1 = 1$, and $f^{(0)}_{S}$ is always the all-zero vector.

For the $\varepsilon$-leaky $(W, S)$-privacy scheme, we also only present $\key_{2} = \big(\pi, f_{U}, f^{(0)}_{S}, f^{(1)}_{S}\big)$ with $\pi = (0, 1, 2)$.
The possible realizations of $\key_{2}$ are listed in Table~\ref{res2: ws-privacy}.
If $f_{U}$ has a Hamming weight of $1$, $(f^{(0)}_{S}, f^{(1)}_{S})$ have Hamming weights of either $(1,0)$ or $(1,1)$, to ensure that the overall Hamming weight is at least $M + 1 = 2$.
If $\ptrn_{U}$ has a Hamming weight of $2$, $(f^{(0)}_{S}, f^{(1)}_{S})$ can have any combination of Hamming weights, i.e., $(0,0)$, $(0,1)$, $(1,0)$ or $(1,1)$.

\section{Proof of Theorem~\ref{thm: leakage W-privacy}} \label{app: leakage WS-privacy}

Fix the download cost $D$ and the number of servers $N$.
According to Theorem~\ref{thm: W-privacy}, there exists an $\varepsilon$-leaky $W$-privacy scheme with download cost $D$, when the leakage ratio exponent $\varepsilon$ satisfies
\begin{equation*}
    D = 1 + \frac{1}{N - 1} \cdot \bigg( 1 - \frac{1}{\sum_{k = 0}^{\lceil g \rceil - 1} \binom{g - 1}{k} (N - 1)^{k} \ee^{-k\varepsilon}} \bigg),
\end{equation*}
where $g = \frac{K}{M + 1}$ and $\binom{g - 1}{k}$ is the generalized binomial coefficient.

Define a constant $C = - \ln \big(1 - (N - 1)(D - 1)\big)$ and we know $C \geq 0$ as $D \leq 1 + \frac{1}{N - 1}$.
Then, we can rewrite the above equality as
\begin{equation*}
    \ee^{C} = \sum_{k = 0}^{\lceil g \rceil - 1} \binom{g - 1}{k} (N - 1)^{k} \ee^{-k\varepsilon}.
\end{equation*}
Notice that $\binom{g - 1}{k} \leq \binom{\lceil g \rceil - 1}{k}$ as $g \geq k$.
We have
\begin{equation*}
    \ee^{C} 
    \leq \sum_{k = 0}^{\lceil g \rceil - 1} \binom{\lceil g \rceil - 1}{k} (N - 1)^{k} \ee^{-k\varepsilon} 
    = \big(1 + (N - 1)\ee^{-\varepsilon}\big)^{\lceil g \rceil - 1}.
\end{equation*}
It follows that
\begin{equation*}
    C \leq \big(\lceil g \rceil - 1\big) \ln \big(1 + (N - 1)\ee^{-\varepsilon}\big).
\end{equation*}

In addition, using the fact that $\ln(x + 1) \leq x$ for any $x > 0$, we have
\begin{equation*}
    C \leq \big(\lceil g \rceil - 1\big) (N - 1)\ee^{-\varepsilon},
\end{equation*}
which implies
\begin{equation*}
    \ee^{\varepsilon} \leq \frac{\big(\lceil g \rceil - 1\big)(N - 1)}{C}.
\end{equation*}
As a result, we derive the upper bound on the leakage ratio exponent as
\begin{equation*}
    \varepsilon \leq \ln \big(\big(\lceil g \rceil - 1\big)(N - 1)\big) - \ln C = \bigO(\log g).
\end{equation*}
This gives the desired scaling behavior of
\begin{equation*}
    \varepsilon = \bigO\Big(\log \frac{K}{M + 1}\Big).    
\end{equation*}

\section{Proof of Theorem~\ref{thm: leakage WS-privacy}} \label{app: leakage W-privacy}

Fix the download cost $D$ and the number of servers $N$.
According to Theorem~\ref{thm: WS-privacy}, there exists an $\varepsilon$-leaky $(W, S)$-privacy scheme with download cost $D$, when the leakage ratio exponent $\varepsilon$ satisfies 
\begin{equation*}
    D = 1 + \frac{1}{N - 1} \cdot \Big( 1 - \frac{1}{((N - 1)\ee^{-\varepsilon} + 1)^{K - 2}} \Big).
\end{equation*}

Define a constant $C = - \ln \big(1 - (N - 1)(D - 1)\big)$ and we know $C \geq 0$ as $D \leq 1 + \frac{1}{N - 1}$.
Then, we can rewrite the above equality as
\begin{equation*}
    C = (K - 2) \ln \big((N - 1)\ee^{-\varepsilon} + 1\big).
\end{equation*}
Using the fact that $\ln(x + 1) \leq x$ for any $x > 0$, we have
\begin{equation*}
    C \leq (K - 2) (N - 1)\ee^{-\varepsilon},
\end{equation*}
which implies
\begin{equation*}
    \ee^{\varepsilon} \leq \frac{(K - 2)(N - 1)}{C}.
\end{equation*}
As a result, we derive the upper bound on the leakage ratio exponent
\begin{equation*}
    \varepsilon \leq \ln ((K - 2)(N - 1)) - \ln C = \bigO(\log K).
\end{equation*}
This completes the proof.

\section{Proof of Lemma~\ref{lem: W-privacy}} \label{app: W-privacy}

Observe that server $n$ generates its answer solely based on the query and database, the conditional distribution of $\ans{\dmd, \si}_{n}$ given $\que{\dmd,\si}_{n}$ does not depend on the demand index $W$ and the side information index set $S$.
It suffices to show that for any server $n \in [N]$ and any two demand indices $W, W' \in [K]$,
\begin{equation*}
    \frac{\bigprob{\que{\dmd,\si}_{n} \mid W}}{\bigprob{\que{\dmd,\si}_{n} \mid W'}} \leq \ee^{\varepsilon}.
\end{equation*}

To evaluate the privacy leakage, we analyze the probability of observing a specific query $q \in \qspace$ at server $n \in [N]$.
Recall that $g = \frac{K}{M + 1}$ and the size of each query $|\que{\dmd,\si}_{n}|$ is either $0$, or $K$, or $k(M + 1) $ for some $k \in \big[ \lceil g \rceil - 1 \big]$.

Let $w(q)$ be the Hamming weight of the query $q \in \qspace$.
Since the random bijection $\pi$ is uniform, the server index $n$ is uniformly distributed over $[N]$.
We can expand the probability 
\begin{equation*}
    \begin{aligned}
        \bigprob{\que{\dmd,\si}_{n} = q \mid W}
        ={}& \frac{1}{N} \sum_{S \subseteq [K] \setminus \{W\}} \Pr[S \mid n] \cdot \bigprob{\que{\dmd,\si}_{n} = q \mid W, S, n} \\
        ={}& \frac{1}{N} \cdot \frac{1}{\binom{K - 1}{M}} \sum_{S \subseteq [K] \setminus \{W\}} \bigprob{\que{\dmd,\si}_{n} = q \mid W, S, n} \\
    \end{aligned}
\end{equation*}
Now, we analyze it by three cases according to the Hamming weight of the query $q \in \qspace$.

\textbf{Case 1: $w(q) = 0$.}

This case is straightforward.
All indices in the query are $0$, indicating that the server $n$ must be the inference server.
This indicates that there are $\binom{K - 1}{M}$ possible side information index sets $S$, and the vectors $\ptrn_{U}$ and $\ptrn_{S}^{(0)}$ must be all-zero vectors.
Thus, the probability $\bigprob{\que{\dmd,\si}_{n} = q \mid W}$ is given by
\begin{equation}
    \bigprob{\que{\dmd,\si}_{n} = q \mid W} 
    = \frac{1}{N} \cdot \frac{\binom{K - 1}{M}}{\binom{K - 1}{M}} \cdot \bigprob{\que{\dmd,\si}_{n} = q \mid W, S, n}
    = \frac{1}{N} \cdot P_{0}.
\end{equation}

\textbf{Case 2: $w(q) = k(M + 1)$ for some $k \in [\lceil g \rceil- 1]$.}

If the server $n$ is the inference server, then $\pi(n) = 0$.
Thus, the Hamming weight of the vector $\ptrn_{U}$ and $\ptrn_{S}^{(0)}$ is respectively $\ell_{k}$ and $k(M + 1) - \ell_{k}$, where $\ell_{k} = \min \Set{k(M + 1), K - M - 1}$.
This yields that there are $\binom{k(M + 1)}{k(M + 1) - \ell_{k}}\binom{K - k(M + 1) - 1}{M - k(M + 1) + \ell_{k}}$ possible side information index sets $S$.
It follows that $\bigprob{\que{\dmd,\si}_{n} = q \mid W}$ can be expressed as
\begin{equation} \label{eq: p_k}
    \begin{aligned}
        &\bigprob{\que{\dmd,\si}_{n} = q \mid W} \\
        ={}& \frac{1}{N} \cdot  \frac{\binom{k(M + 1)}{k(M + 1) - \ell_{k}}\binom{K - k(M + 1) - 1}{M - k(M + 1) + \ell_{k}}}{\binom{K - 1}{M}} \cdot \bigprob{\que{\dmd,\si}_{n} = q \mid W, S, n} \\
        ={}& \frac{1}{N} \cdot  \frac{\binom{k(M + 1)}{\ell_{k}}\binom{K - k(M + 1) - 1}{M - k(M + 1) + \ell_{k}}}{\binom{K - 1}{M}} \cdot P_{k} \cdot \frac{1}{\binom{M}{k(M + 1) - \ell_{k}} \binom{K - M - 1}{\ell_{k}} \cdot (N - 1)^{k(M + 1)}}. \\
    \end{aligned}
\end{equation}

Conversely, we consider the case where $n$ is not the inference server, i.e., $\pi(n) \neq 0$.
Thus, the Hamming weight of the vector $\ptrn_{U}$ and $\ptrn_{S}^{(1)}$ is respectively $(k - 1)(M + 1) - 1$ and $M$.
Accordingly, there are $\binom{k(M + 1) - 1}{M}$ possible side information index sets $S$.
It follows that $\bigprob{\que{\dmd,\si}_{n} = q \mid W}$ can be expressed as
\begin{equation}\label{eq: p_k - 1}
    \begin{aligned}
        &\bigprob{\que{\dmd,\si}_{n} = q \mid W} \\
        ={}& \frac{1}{N} \cdot \frac{\binom{k(M + 1) - 1}{M}}{\binom{K - 1}{M}} \cdot \bigprob{\que{\dmd,\si}_{n} = q \mid W, S, n} \\
        ={}& \frac{1}{N} \cdot \frac{\binom{k(M + 1) - 1}{M}}{\binom{K - 1}{M}} \cdot P_{k - 1} \cdot \frac{1}{\binom{K - M - 1}{(k - 1)(M + 1)} \cdot (N - 1)^{k(M + 1) - 1}}.
    \end{aligned}
\end{equation}

\textbf{Case 3: $w(q) = K$.}

In this case, all indices in the query are non-zero, indicating that the server $n$ must be a non-inference server.
Thus, $\bigprob{\que{\dmd,\si}_{n} = q \mid W}$ can be expressed as
\begin{equation}
    \begin{aligned}
        \bigprob{\que{\dmd,\si}_{n} = q \mid W}
        =&{} \frac{1}{N} \cdot \frac{\binom{K}{M}}{\binom{K - 1}{M}} \cdot \bigprob{\que{\dmd,\si}_{n} = q \mid W, S, n} \\
        =&{} \frac{1}{N} \cdot \frac{\binom{K}{M}}{\binom{K - 1}{M}} \cdot P_{\lceil g \rceil- 1} \cdot \frac{1}{(N - 1)^{K - 1}}.
    \end{aligned}
\end{equation}

For any two demand indices $W, W'\in[K]$ and any realization $q \in \qspace$ of the query at server $n \in [N]$, the conditional probability
$\bigprob{\que{\dmd,\si}_n = q \mid W}$ can only arise from the exactly one equation in Cases 1 and 3, or from two equations \eqref{eq: p_k} and \eqref{eq: p_k - 1} in Case 2.
Therefore, for any $q \in \qspace$, the ratio
\begin{equation}
    \frac{\bigprob{\que{\dmd,\si}_{n} = q \mid W}}{\bigprob{\que{\dmd,\si}_{n} = q \mid W'}}
\end{equation}
can take only three possible values: either $1$, or
\begin{equation} \label{P_k vs P_k - 1}
    \frac{\binom{k(M + 1)}{\ell_{k}}\binom{K - k(M + 1) - 1}{M - k(M + 1) + \ell_{k}}}{\binom{k(M + 1) - 1}{M}} \cdot \frac{\binom{K - M - 1}{(k - 1)(M + 1)}}{\binom{M}{k(M + 1) - \ell_{k}} \binom{K - M - 1}{\ell_{k}} \cdot (N - 1)} \cdot \frac{P_{k}}{P_{k - 1}}
\end{equation}
or its reciprocal for some $k \in [\lceil g \rceil - 1]$.

By substituting the values of $P_{k}$ and $P_{k - 1}$, we have $P_{k} / P_{k - 1} = \frac{K - k(M + 1)}{k(M + 1)} \cdot (N - 1) \ee^{-\varepsilon}$.
When $k \in [\lceil g \rceil - 2]$, we have $\ell_{k} = k(M + 1)$, and thus Eq.~(\ref{P_k vs P_k - 1}) can be simplified as
\begin{equation}
    \frac{\binom{K - k(M + 1) - 1}{M}}{\binom{k(M + 1) - 1}{M}} \cdot \frac{\binom{K - M - 1}{(k - 1)(M + 1)}}{\binom{K - M - 1}{k(M + 1)} \cdot (N - 1)} \cdot \frac{K - k(M + 1)}{k(M + 1) \cdot (N - 1)} \cdot (N - 1) \ee^{-\varepsilon}
    = \ee^{-\varepsilon}.
\end{equation}
When $k = \lceil g \rceil - 1$, we have $\ell_{k} = K - M - 1$, and thus Eq.~(\ref{P_k vs P_k - 1}) can be simplified as
\begin{equation}
    \frac{\binom{k(M + 1)}{K - M - 1}}{\binom{k(M + 1) - 1}{M}} \cdot \frac{\binom{K - M - 1}{(k - 1)(M + 1)}}{\binom{M}{(k + 1)(M + 1) - K}} \cdot \frac{K - k(M + 1)}{k(M + 1) \cdot (N - 1)} \cdot (N - 1) \ee^{-\varepsilon}
    = \ee^{-\varepsilon}.
\end{equation}
Consequently, the ratio in Eq.~(\ref{P_k vs P_k - 1}) is always equal to $\ee^{-\varepsilon}$ for $k \in [\lceil g \rceil - 1]$.
This completes the proof of $\varepsilon$-leaky $W$-privacy for the scheme under the pattern $\key_{1}$.

\section{Proof of Lemma~\ref{lem: WS-privacy}} \label{app: WS-privacy}

Observe that server $n$ generates its answer solely based on the query and database, the conditional distribution of $\ans{\dmd, \si}_{n}$ given $\que{\dmd,\si}_{n}$ does not depend on the demand index $W$ and the side information index set $S$.
It suffices to show that for any server $n \in [N]$, any two demand indices $W, W' \in [K]$ and any two side information index sets $S \subseteq [K] \setminus \{W\}$ and $S' \subseteq [K] \setminus \{W'\}$ with $|S| = |S'| = M$,
\begin{equation*}
    \frac{\bigprob{\que{\dmd,\si}_{n} \mid W, S}}{\bigprob{\que{\dmd,\si}_{n} \mid W', S'}} \leq \ee^{\varepsilon}.
\end{equation*}

To evaluate the privacy leakage, we analyze the probability of observing a specific query $q \in \qspace$ at server $n \in [N]$.
Let $\ell$ and $s_{j}$ ($j \in [0: 1]$) be the Hamming weight of the vector $\ptrn_{U}$ and $\ptrn_{S}^{(j)}$, where $j = 0$ if $\pi(n) = 0$ and $j = 1$ otherwise.
Recall that $r = (N - 1)\ee^{-\varepsilon}$ and the probability that the vector $\ptrn_{U}$ has Hamming weight $\ell$ is
\begin{equation*}
    P_{\ell} = \frac{\binom{K - 2}{\ell} \cdot r^{\ell}}{(r + 1)^{K - 2}},
\end{equation*}
and
\begin{equation*}
    \Pr[ s_{j} \mid j, \ell] = \frac{r^{\ell + s_{j} + j} + (-1)^{\ell + s_{j} + j} \cdot r}{(r + 1) \cdot r^{\ell + j}}. 
\end{equation*}
Since the random bijection $\pi$ is uniform, the server index $n$ is uniformly distributed over $[N]$.
Besides, the Hamming weight $\ell$ is independent of the server index $n$, and the index $j$ is totally determined by $\pi(n)$.
We can expand the probability
\begin{equation*}
    \begin{aligned}
        \bigprob{\que{\dmd,\si}_{n} = q \mid W, S} 
        ={}& \frac{1}{N} \cdot \sum_{\ell = 0}^{K - 2} \sum_{s_{j} = 0}^{1} P_{\ell} \cdot \Pr[s_{j} \mid n, \ell] \cdot \bigprob{\que{\dmd,\si}_{n} = q \mid W, S, n, \ell, s_{j}} \\
        ={}& \frac{1}{N} \cdot \sum_{\ell = 0}^{K - 2} \sum_{s_{j} = 0}^{1} P_{\ell} \cdot \Pr[s_{j} \mid j, \ell] \cdot \bigprob{\que{\dmd,\si}_{n} = q \mid W, S, n, \ell, s_{j}}
    \end{aligned}
\end{equation*}

Now, we analyze it by three cases according to the Hamming weight of the query $q \in \qspace$, denoted by $w(q)$.

\textbf{Case 1: $w(q) = 0$.}

This case is straightforward.
All indices in the query are $0$, indicating that the server $n$ must be the inference server and the vectors $\ptrn_{U}$ and $\ptrn_{S}^{(0)}$ must be all-zero vectors.
The probability $\bigprob{\que{\dmd,\si}_{n} = q \mid W, S}$ is given by
\begin{equation}
    \bigprob{\que{\dmd,\si}_{n} = q \mid W, S} 
    = \frac{1}{N} \cdot P_{0} \cdot 1
    = \frac{1}{N} \cdot P_{0}.
\end{equation}

\textbf{Case 2: $w(q) = l \in [K - 1]$.}

If the server $n$ is the inference server, then $\pi(n) = 0$.
Since $M = 1$, the Hamming weight of the vectors $\ptrn_{U}$ and $\ptrn_{S}^{(0)}$ is either $(l, 0)$ or $(l - 1, 1)$.
When the pair $(\ptrn_{U}, \ptrn_{S}^{(0)})$ has Hamming weight $(l, 0)$, $\bigprob{\que{\dmd,\si}_{n} = q \mid W, S}$ can be expressed as
\begin{equation}
    \begin{aligned} 
        \bigprob{\que{\dmd,\si}_{n} = q \mid W, S}
        ={}& \frac{1}{N} \cdot P_{l} \cdot \frac{r^{l} + (-1)^{l} \cdot r}{(r + 1) \cdot r^{l}} \cdot \frac{1}{\binom{K - 2}{l} \cdot (N - 1)^{l}} \\
        ={}& \frac{1}{N(N - 1)^{l}} \cdot  \frac{r^{l} + (-1)^{l} \cdot r}{(r + 1)^{K - 1}},
    \end{aligned}
\end{equation}
and when the pair $(\ptrn_{U}, \ptrn_{S}^{(0)})$ has Hamming weight $(l - 1, 1)$, $\bigprob{\que{\dmd,\si}_{n} = q \mid W, S}$ can be expressed as
\begin{equation}
    \begin{aligned} 
        \bigprob{\que{\dmd,\si}_{n} = q \mid W, S}
        ={}& \frac{1}{N} \cdot P_{l - 1} \cdot \frac{r^{(l - 1) + 1} + (-1)^{(l - 1) + 1} \cdot r}{(r + 1) \cdot r^{l - 1}} \cdot \frac{1}{\binom{K - 2}{l - 1} \cdot (N - 1)^{l}} \\
        ={}& \frac{1}{N} \cdot P_{l - 1} \cdot \frac{r^{l} + (-1)^{l} \cdot r}{(r + 1) \cdot r^{l - 1}} \cdot \frac{1}{\binom{K - 2}{l - 1} \cdot (N - 1)^{l}} \\
        ={}& \frac{1}{N(N - 1)^{l}} \cdot  \frac{r^{l} + (-1)^{l} \cdot r}{(r + 1)^{K - 1}}.
    \end{aligned}
\end{equation}
Consequently, for either case, we have
\begin{equation} \label{eq: P_l}
     \bigprob{\que{\dmd,\si}_{n} = q \mid W, S} = \frac{1}{N(N - 1)^{l}} \cdot  \frac{r^{l} + (-1)^{l} \cdot r}{(r + 1)^{K - 1}}.
\end{equation}

Conversely, we consider the case where $n$ is not the inference server, i.e., $\pi(n) \neq 0$.
In this scenario, the Hamming weight of the vectors $\ptrn_{U}$ and $\ptrn_{S}^{(1)}$ is either $(l - 1, 0)$ or $(l - 2, 1)$.
When the pair $(\ptrn_{U}, \ptrn_{S}^{(1)})$ has Hamming weight $(l - 1, 0)$, $\bigprob{\que{\dmd,\si}_{n} = q \mid W, S}$ can be expressed as
\begin{equation}
    \begin{aligned} 
        \bigprob{\que{\dmd,\si}_{n} = q \mid W, S}
        ={}& \frac{1}{N} \cdot P_{l - 1} \cdot \frac{r^{(l - 1) + 1} + (-1)^{(l - 1) + 1} \cdot r}{(r + 1) \cdot r^{(l - 1) + 1}} \cdot \frac{1}{\binom{K - 2}{l - 1} \cdot (N - 1)^{l - 1}} \\
        ={}& \frac{1}{N(N - 1)^{l - 1}} \cdot \frac{r^{l - 1} + (-1)^{l}}{(r + 1)^{K - 1}},
    \end{aligned}    
\end{equation}
and when the pair $(\ptrn_{U}, \ptrn_{S}^{(1)})$ has Hamming weight $(l - 2, 1)$, $\bigprob{\que{\dmd,\si}_{n} = q \mid W, S}$ can be expressed as
\begin{equation}
    \begin{aligned} 
        \bigprob{\que{\dmd,\si}_{n} = q \mid W, S}
        ={}& \frac{1}{N} \cdot P_{l - 2} \cdot \frac{r^{(l - 2) + 1 + 1} + (-1)^{(l - 2) + 1 + 1} \cdot r}{(r + 1) \cdot r^{(l - 2) + 1}} \cdot \frac{1}{\binom{K - 2}{l - 2} \cdot (N - 1)^{l - 1}} \\
        ={}& \frac{1}{N} \cdot P_{l - 2} \cdot \frac{r^{l} + (-1)^{l} \cdot r}{(r + 1) \cdot r^{l - 1}} \cdot \frac{1}{\binom{K - 2}{l - 2} \cdot (N - 1)^{l - 1}} \\
        ={}& \frac{1}{N(N - 1)^{l - 1}} \cdot \frac{r^{l - 1} + (-1)^{l}}{(r + 1)^{K - 1}}.
    \end{aligned}
\end{equation}
Consequently, for either case, we have
\begin{equation} \label{eq: P_l-1}
    \bigprob{\que{\dmd,\si}_{n} = q \mid W, S} = \frac{1}{N(N - 1)^{l - 1}} \cdot \frac{r^{l - 2} + (-1)^{l - 1}}{(r + 1)^{K - 1}}.
\end{equation}

\textbf{Case 3: $w(q) = K$.}

In this case, all indices in the query are non-zero, indicating that the server $n$ must be a non-inference server.
Thus, $\bigprob{\que{\dmd,\si}_{n} = q \mid W, S}$ can be expressed as
\begin{equation}
    \begin{aligned}
        \bigprob{\que{\dmd,\si}_{n} = q \mid W, S}
        ={} &\frac{1}{N} \cdot P_{K - 2} \cdot \frac{r^{(K - 2) + 1 + 1} + (-1)^{(K - 2) + 1 + 1} \cdot r}{(r + 1) \cdot r^{K - 1}} \cdot \frac{1}{(N - 1)^{K - 1}} \\
        ={} &\frac{1}{N} \cdot P_{K - 2} \cdot \frac{r^{K} + (-1)^{K} \cdot r}{(r + 1) \cdot r^{K - 1}} \cdot \frac{1}{(N - 1)^{K - 1}} \\
        ={} &\frac{1}{N(N - 1)^{K - 1}} \cdot \frac{r^{K - 1} + (-1)^{K}}{(r + 1)^{K - 1}}.
    \end{aligned}
\end{equation}

For any two pairs of demand and side information indices $(W, S)$ and $(W', S')$ and any realization $q \in \qspace$ of the query at server $n$, the conditional probability
$\bigprob{\que{\dmd,\si}_n = q \mid W, S}$ can only arise from the exactly one equation in Cases 1 and 3, or from Eq.~\eqref{eq: P_l} and Eq.~\eqref{eq: P_l-1} in Case 2.
Therefore, for any $q \in \qspace$, the ratio
\begin{equation} \label{P_l vs P_l-1}
    \frac{\bigprob{\que{\dmd,\si}_{n} = q \mid W, S}}{\bigprob{\que{\dmd,\si}_{n} = q \mid W', S'}}
\end{equation}
can take only three possible values: either $1$, or
\begin{equation}
    \frac{1}{N - 1} \cdot \frac{r^{l} + (-1)^{l} \cdot r}{r^{l - 1} + (-1)^{l}} = \frac{r}{N - 1},
\end{equation}
or its reciprocal for some $l \in [K - 1]$.
By substituting $r = (N - 1)\ee^{-\varepsilon}$, Eq.~\eqref{P_l vs P_l-1} can be simplified as $\ee^{-\varepsilon}$.
This completes the proof of $\varepsilon$-leaky $(W, S)$-privacy for the scheme with $M = 1$ under the pattern $\key_{2}$.
}

\end{document}